\definecolor{blue}{rgb}{0.1,0.2,0.5}
\definecolor{brown}{rgb}{0.6,0.6,0.2}
\theoremstyle{plain}
\newtheorem{theorem}{Theorem}
\newcommand{\newtheoremwithcrefformat}[2]{%
  \newtheorem{#1}[theorem]{#2}%
  \crefformat{#1}{##2\MakeUppercase#1~##1##3}%
  \Crefformat{#1}{##2\MakeUppercase#1~##1##3}%
}
\newcommand{\newseptheoremwithcrefformat}[2]{%
  \newtheorem{#1}{#2}%
  \crefformat{#1}{##2\MakeUppercase#1~##1##3}%
  \Crefformat{#1}{##2\MakeUppercase#1~##1##3}%
}
\newtheorem{lemma}[theorem]{Lemma}
\crefname{theorem}{Theorem}{Theorems}
\theoremstyle{nonumberplain}
\newtheorem{proof}{Proof}
\newtheorem{claimproof}{Proof of Claim}
\def\cqedsymbol{\ifmmode$\lrcorner$\else{\unskip\nobreak\hfil
\penalty50\hskip1em\null\nobreak\hfil$\lrcorner$
\parfillskip=0pt\finalhyphendemerits=0\endgraf}\fi}
\tikzset{
    position/.style args={#1:#2 from #3}{
        at=(#3.#1), anchor=#1+180, shift=(#1:#2)
    }
}
\newcommand{\Oh}{\mathcal{O}}
\newcommand{\ceil}[1]{\left \lceil #1 \right \rceil }
\let\originalleft\left
\let\originalright\right
\renewcommand{\left}{\mathopen{}\mathclose\bgroup\originalleft}
\renewcommand{\right}{\aftergroup\egroup\originalright}
\renewcommand{\leq}{\leqslant}
\renewcommand{\geq}{\geqslant}
\newcommand{\defproblem}[3]{
  \vspace{3mm}
  \noindent\fbox{
  \begin{minipage}{0.97\textwidth}  
  #1 \\ 
  {\bf{Input:}} #2  \\
  {\bf{Output:}} #3
  \end{minipage}
  }
  \vspace{3mm}
}
\definecolor{lars}{HTML}{0F3CCB}
\definecolor{larsA}{HTML}{196F3D}
\definecolor{larsB}{HTML}{FF8000}
\definecolor{larsC}{HTML}{DF0101}
\newcommand{\frontpageformat}{arxiv}
\begin{document}

\ifthenelse{\equal{\frontpageformat}{submission}}{%
\author{anonymous}
\title{A tight quasi-polynomial bound for Global Label Min-Cut}
\begin{titlepage}
\def\thepage{}
\thispagestyle{empty}
\maketitle
}{%
\author[1,2]{Lars Jaffke}
\author[3]{Paloma T.\ Lima}
\author[1]{Tom\'{a}\v{s} Masa\v{r}\'{i}k}
\author[1]{Marcin Pilipczuk}
\author[1,4]{Ueverton S.\ Souza}

\affil[1]{University of Warsaw, Poland}
\affil[2]{University of Bergen, Norway}
\affil[3]{IT University of Copenhagen, Denmark}
\affil[4]{Fluminense Federal University, Brazil}

\title{A tight quasi-polynomial bound for Global Label Min-Cut%
\thanks{This research is a part of a project that has received funding from the European Research Council (ERC)
under the European Union's Horizon 2020 research and innovation programme
Grant Agreement 714704 (LJ, TM, MP, US) and from the Research Council of Norway (LJ).}}

\begin{titlepage}
\def\thepage{}
\thispagestyle{empty}
\maketitle
\begin{textblock}{20}(0, 13.3)
\includegraphics[width=40px]{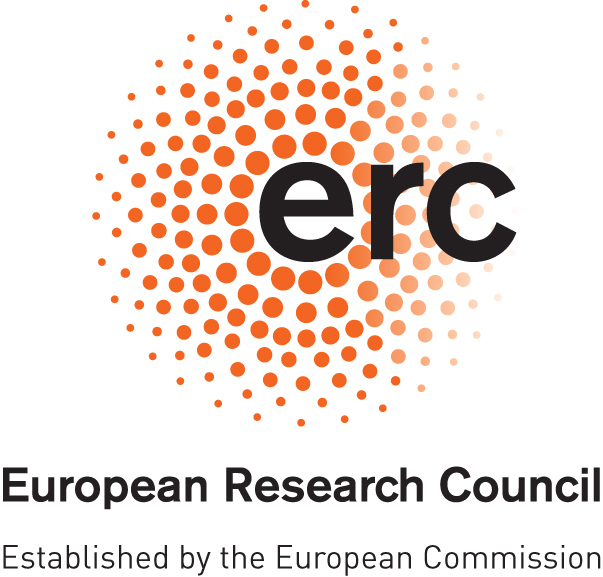}%
\end{textblock}
\begin{textblock}{20}(0, 14.1)
\includegraphics[width=40px]{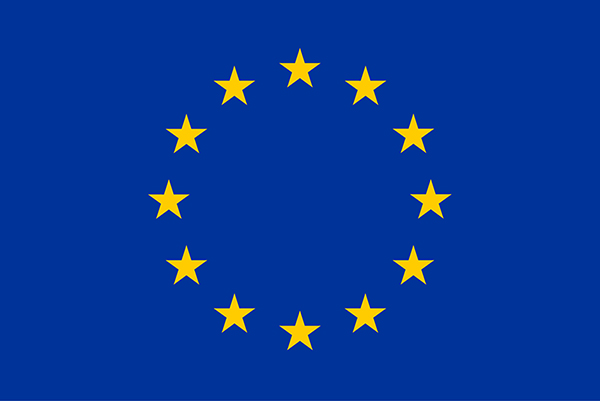}%
\end{textblock}
}

\begin{abstract}
We study a generalization of the classic \textsc{Global Min-Cut} problem, called \textsc{Global Label Min-Cut} (or sometimes \textsc{Global Hedge Min-Cut}):
the edges of the input (multi)graph are labeled (or partitioned into color classes or hedges), and removing all edges of the same label (color or from the same hedge) costs one.
The problem asks to disconnect the graph at minimum cost.

While the $st$-cut version of the problem is known to be \NP-hard, the above global cut version is known to admit a quasi-polynomial randomized $n^{\Oh(\log \mathrm{OPT})}$-time algorithm
due to Ghaffari, Karger, and Panigrahi [SODA 2017].
They consider this as ``strong evidence that this problem is in \Poly''. We show that this is actually not the case.
We complete the study of the complexity of the \textsc{Global Label Min-Cut} 
problem by showing that the quasi-polynomial running time is probably optimal:
We show that the existence of an algorithm with running time $(np)^{o(\log n/ (\log \log n)^2)}$ would contradict the Exponential Time Hypothesis,
  where $n$ is the number of vertices, and $p$ is the number of labels in the input.
The key step for the lower bound is a proof that \textsc{Global Label Min-Cut} is \W[1]-hard when parameterized by the \emph{number of uncut labels}.
In other words, the problem is difficult in the regime where almost all labels need to be cut to disconnect the graph. 
To turn this lower bound into a quasi-polynomial-time lower bound, we also needed to revisit the framework due to Marx [Theory Comput. 2010] of proving lower bounds assuming Exponential Time Hypothesis
through the \textsc{Subgraph Isomorphism} problem parameterized by the number of edges of the pattern.
Here, we provide an alternative simplified proof of the hardness of this problem that is more versatile with respect to the choice of the regimes of the parameters.
\end{abstract}

\end{titlepage}

\section{Introduction}

Given a weighted graph $G=(V,E)$ with $n$ vertices and $m$ edges,  in the (Global) {\sc Min-Cut} problem we are asked to find a cut $[S,T]$ of $G$ minimizing the total weight of edges crossing from $S$ to $T$,  where a cut of $G$ is a partition of $V$ into two non-empty subsets $S$ and $T=V\setminus S$.  This problem is one of the most fundamental problems in computer science, and dates back to the earliest days of the field~\cite{ford_fulkerson_1956}.

An $st$-cut of a graph $G$ is a cut $[S,T]$ having $s\in S$ and $t\in T$, and a minimum cut of $G$ is a cut having an edge-cut set of minimum total weight.
We denote by {\sc Min-$st$-Cut} the problem of computing the minimum $st$-cut of a given graph.
A minimum $st$-cut of a graph $G$ can be found by using any algorithm for computing a maximum $st$-flow.
While the history of minimum $st$-cut and maximum flow algorithms dates back to the
1950s~\cite{ford_fulkerson_1956,ford1962dr}, revolutionary results are obtained regularly up to today:
we mention~\cite{FOCSYu,SiCompTarun,MadryFOCS}, and, most notably, the recent breakthrough near-linear algorithm for polynomially
bounded capacities~\cite{near-lin-flow}.

Due the duality of maximum $st$-flows and minimum $st$-cuts,  a typical strategy to solve the {\sc Min-Cut} problem consists of finding minimum $st$-cuts for a fixed vertex $s$ and all $n-1$ possible choices of $t$ and then selecting the smallest one.  However,  a few number of non-flow-based methods for solving {\sc Min-Cut} 
have been developed~\cite{icalp2020,KargerAlgorithm,Karger2000,matula1993linear,SIDMANagamochi,StoerAlgorithm}.
In particular,  in 1993, Karger~\cite{KargerAlgorithm} developed a simple randomized algorithm to compute a minimum cut of a connected graph, 
   which later inspired many subsequent works.

A natural generalization of {\sc Min-Cut} on graphs is the {\sc Min-Cut} problem on hypergraphs, which aims to determine the smallest number of hyperedges to be removed to disconnect a given hypergraph. %
In 1997, Stoer and Wagner~\cite{StoerAlgorithm} presented a non-flow-based algorithm for {\sc Min-Cut} on hypergraphs with $O(n\cdot m+n^2\log n)$ running time.
This approach was generalized by Queyranne~\cite{QueyranneSubmodular,queyranne1998minimizing} for minimizing symmetric submodular functions.  
In~\cite{QueyranneSubmodular,queyranne1998minimizing}, it was shown that if $V$ is a finite set and $f : 2^V\rightarrow \mathbb{R}$ is a polynomial-time computable function that is symmetric and submodular then there is a polynomial-time algorithm that minimizes $f$ over all proper subsets of $V$. 
	Rizzi~\cite{Rizzi00} observed that the hypergraph cut function is symmetric and submodular,
	which implies that Min-Cut on hypergraphs can be solved by using polynomial-time algorithms
	for submodular function minimization~\cite{QueyranneSubmodular,queyranne1998minimizing}.
Note that {\sc Min-Cut} on hypergraphs is a particular case of a more general cut problem on edge-colored multigraphs: if the vertex set of each hyperedge $i$ is viewed
as a connected subgraph whose edges are colored with color $c_i$ then the goal is to find a cut $[S,T]$ minimizing the number of colors in $\partial(S)$, the set of edges with exactly one endpoint in $S$.

In this paper, we consider the more general problem where each color class is allowed to induce more than one connected component. 
Let $G=(V,E)$ be an edge-colored multigraph with $n$ vertices, $m$ edges,  and an edge coloring $c:E\to \{1,2,\ldots,p\}$,  not necessarily proper. 
We let 
$c(\partial(S))$ be the set of colors that appear in $\partial(S)$. 
\CMC is the problem of finding a subset $S \subseteq V$, $S \neq \emptyset, V$ 
that minimizes the size of $c(\partial(S))$.

\defproblem
	{\CMC}
	{A multigraph $G=(V,E)$ with an edge coloring $c:E \to \{1,2,\ldots,p\}$ and an integer $k>0$.}
	{\yes if $G$ has a proper subset $\emptyset \neq S\subsetneq V$ such that  $|c(\partial(S))|\leq k$, \no otherwise.}
	
\CMC naturally captures a problem in network survivability
in which the failure of a single connection implies failure of a whole set of connections, see for instance~\cite{coudert2016combinatorial,CoudertDPRV07,GhaffariKP17}.
In the literature, the colors are sometimes called \emph{labels} or \emph{hedges}; in this work, we stick to \emph{colors}.

Similarly, in \CMCst, we are asked to find an edge-cut set that separates a given pair $s,t$ of vertices using at most $k$ colors (or,  as few colors as possible).
Note that the analogously defined \textsc{Min-Cut} on ``hyperedge-colored'' hypergraphs is in fact the same problem as \CMC{}.

The computational complexity of \CMCst and \CMC has been widely investigated in recent years~\cite{Blin201466,ItorProtti,CoudertDPRV07,coudert2016combinatorial,fellows2010parameterized,GhaffariKP17,Linqing,zhang2014,zhang2011approximation,zhang2016label,Zhang2018,ZHANG2020INFCOMP}.
As first observed in~\cite{zhang2014},  if we define a function $f$ on $2^V$ that counts the number of colors in an edge-cut set resulting from a vertex subset $S$, that is, $f(S) = c(\partial(S))$, then we can easily verify that $f$ is not submodular.
Therefore we cannot use the same approach as for \textsc{Min-Cut} on hypergraphs via submodular function minimization to solve \CMC.

In the special case when every color contains at most $r$ edges, 
Coudert et al.~\cite{CoudertDPRV07} showed that \CMC can be solved in polynomial time, but  \CMCst is \NP-hard even when each color contains at most two edges.  
Furthermore, Blin et al.~\cite{Blin201466} presented a randomized 
polynomial-time
algorithm for \CMC that returns an optimal colored cut of $G$ with probability at least~$|V|^{-2r}$.

Several approximation and hardness results for  \CMCst are presented in~\cite{Linqing,zhang2014,zhang2011approximation,Zhang2018,ZHANG2020INFCOMP}.  
Zhang and Fu~\cite{zhang2016label} showed that \CMCst is \NP-hard even if the maximum length of any path is equal to two.  %
Regarding the parameterized complexity of \CMCst, Fellows et al.~\cite{fellows2010parameterized} showed that  the problem is \W[2]-hard when parameterized by the number of colors of the solution, and  \W[1]-hard when parameterized by the number of edges of the solution.   
Coudert et al.~\cite{coudert2016combinatorial} showed that \CMC can be solved in time $2^s\cdot n^{O(1)}$, where $s$ is the number of colors that induce more than one nontrivial connected component.

Most importantly, Ghaffari, Karger, and Panigrahi~\cite{GhaffariKP17} showed that \CMC can
be solved exactly in time $n^{\Oh(\log \mathrm{OPT})}$, where $n$ is the number of vertices of the graph
and $\mathrm{OPT}$ denotes the cost of the optimum solution,
and a $(1+\varepsilon)$-approximate solution can be found in time $n^{\Oh(\log(1/\varepsilon))}$. 
This results shows that \CMC is probably not \NP-hard, contrary to its $st$-cut variant, \CMCst. 

Despite the above progress, 
the exact computational complexity of \CMC was left open by all these works. 
To the best of our knowledge, the first time that the computational complexity of \CMC was mentioned as an open question it was in~\cite{CoudertDPRV07}.
Note that the result of~\cite{GhaffariKP17} places \CMC between the classes of problems solvable in polynomial and quasi-polynomial time.
After 15 years of research, the community is still asking for an answer to this question.
More recently, in~\cite{ZHANG2020INFCOMP}, Zhang and Tangi explicitly stated that ``a challenging problem is to determine the exact complexity of {\sc Global Label Cut}.''
In addition, prior to~\cite{GhaffariKP17} it was pointed out~\cite{CoudertDPRV07,zhang2014} that
unlike the situation of {\sc Min-Cut} and {\sc Min $st$-Cut} where both are polynomial-time solvable,  
\CMC seems to behave differently than its local counterpart  (\CMCst) and should be easier than it;
the result of~\cite{GhaffariKP17} confirms this suspicion.

\paragraph{Our results.}
Ghaffari, Karger, and Panigrahi write that their randomized $n^{\Oh(\log \mathrm{OPT})}$-time algorithm provides ``strong evidence that this problem is in \Poly''~\cite[abstract]{GhaffariKP17}.
We show that this is probably not the case, and settle the question of the complexity of \CMC:
one of our main results states that the quasi-polynomial complexity is tight under the Exponential Time Hypothesis~\cite{eth}.%
\footnote{The Exponential Time Hypothesis, a now-standard assumption in fine-grained complexity, 
    together with the Sparsification Lemma asserts that one cannot
resolve satisfiability of an $n$-variable $m$-clause 3-CNF formula in time $2^{o(n+m)}$~\cite{eth}.}
\begin{restatable}{theorem}{thmhard}
\label{thm:hard}
Unless the Exponential Time Hypothesis fails, 
there is no algorithm solving \CMC in time
$(np)^{o(\log n/ (\log \log n)^2)}$. 
\end{restatable}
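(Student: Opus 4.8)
The plan is to derive \Cref{thm:hard} by composing two essentially independent ingredients. The first, and the crux of the argument, is a polynomial-time parameterized reduction showing that \CMC is \W[1]-hard when parameterized by the number of colors that the sought cut leaves uncut. The second is a quantitatively explicit form of the Exponential-Time-Hypothesis (ETH) hardness of \textsc{Subgraph Isomorphism} parameterized by the number of edges of the pattern, obtained by revisiting Marx's framework and reproving it in a simplified way that leaves the relation between the number of variables of the source formula and the pattern parameter free to be tuned. Feeding the hard \textsc{Subgraph Isomorphism} instances produced by the second ingredient into the reduction of the first, with the parameters fixed to one carefully chosen regime, converts the \W[1]-hardness into the stated quasi-polynomial-time lower bound.

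\emph{The reduction (the key step).}
Given an instance of (colored) \textsc{Subgraph Isomorphism} --- a host graph $\mathcal G$ on $h$ vertices and a pattern $H$ with $\Theta(e)$ vertices and $e$ edges --- I would construct in polynomial time a \CMC instance on $\mathrm{poly}(h,e)$ vertices with $\mathrm{poly}(h,e)$ colors, together with a budget, so that a cut leaving at least $e$ colors uncut exists if and only if $H$ embeds into $\mathcal G$. For each vertex of $H$ there is a \emph{selection gadget}: the trace of a candidate cut $[S,\bar S]$ on it is pinned down --- by a backbone of ``heavy'' colors, each a spanning connected subgraph of a suitable part of the instance and hence cut by all but a few prescribed traces of $[S,\bar S]$ --- to encode a choice of image in $V(\mathcal G)$. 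For each edge $u_au_b$ of $H$ there is one dedicated color, wired through the selection gadgets of $u_a$ and $u_b$ so that it has no boundary edge precisely when the two chosen images are adjacent in $\mathcal G$, and is unavoidably cut otherwise; injectivity (and, in the colored variant, compatibility of images with the colors of $H$) is enforced by gadgetry in the same spirit. As $e$ may grow unboundedly with the input, this already yields \W[1]-hardness. The main obstacle of the whole proof lies exactly here: the heavy backbone and the padding colors must be rich enough that every ``cheating'' cut is forced to cut essentially all colors, yet lean enough that they never hand an honest cut a spare uncut color --- the best achievable number of uncut colors must come out to be $\Theta(e)$, and no more, since every superfluous uncut color and every superpolynomial factor in the instance size directly weakens the exponent in the final bound.

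\emph{Revisiting Marx.}
I would re-derive, via a streamlined construction, that under ETH no algorithm solves \textsc{Subgraph Isomorphism} on an $h$-vertex host with a pattern of $e$ edges in time $f(e)\cdot h^{o(e/\log e)}$; crucially for the composition, the hard instances can be produced from a $3$-SAT formula on $N$ variables (with $\Oh(N)$ clauses, by the Sparsification Lemma) with $h^{e/\log e}=2^{\Theta(N)}$, and this for every $e$ in a broad range, say $e=N^{\Theta(1)}$. The construction splits the variables into groups, represents partial assignments of the groups by host vertices, and takes $H$ to be a fixed graph whose routing structure makes embedding searches expensive; the $1/\log e$ in the exponent together with the accompanying mild inflation of $h$ (relative to the $2^{\Theta(N/e)}$ one would get from unstructured patterns) are the two ``logarithmic'' costs of a construction structured enough to interface with the reduction above, and together they account for the $(\log\log n)^2$ in the denominator of the final bound. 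What I aim to add over the textbook presentations is the flexibility of choosing $e$ as an essentially arbitrary polynomial in $N$.

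\emph{Putting it together.}
Starting from a $3$-SAT formula on $N$ variables, I would run the second ingredient with $e\df\lceil\sqrt{N}\rceil$ and feed the resulting \textsc{Subgraph Isomorphism} instance into the reduction. The produced \CMC instance has $n=p=2^{\Theta(\sqrt{N}\log N)}$ up to polynomial factors, so $\log(np)=\Theta(\sqrt{N}\log N)$ and $\log n/(\log\log n)^2=\Theta(\sqrt{N}/\log N)$, while it seeks a cut leaving only $q=\Theta(\sqrt{N})$ colors uncut, and the inherited \textsc{Subgraph Isomorphism} lower bound forbids running time $h^{o(e/\log e)}=(np)^{o(\sqrt{N}/\log N)}$. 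Hence an algorithm for \CMC with running time $(np)^{o(\log n/(\log\log n)^2)}$ would, on these instances, run in time $2^{\Theta(\sqrt{N}\log N)\cdot o(\sqrt{N}/\log N)}=2^{o(N)}$, thereby solving \textsc{Subgraph Isomorphism} and deciding $3$-SAT in subexponential time, contradicting ETH. As is standard for such layered lower bounds, making the informal matching of ``$o(\cdot)$'' rates rigorous is bookkeeping --- one fixes a hypothetical algorithm, reads off its $o(\cdot)$ exponent, and checks that the composition keeps it below the threshold --- and is routine once the reduction and the simplified \textsc{Subgraph Isomorphism} lower bound are in hand.
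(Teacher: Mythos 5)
Your overall architecture is exactly the paper's: (i) a reduction from (partitioned) \PSI to the ``dual'' problem of finding $a=\card{E(H)}$ colors whose union is disconnected, giving \W[1]-hardness in the parameter ``number of uncut colors''; (ii) a re-proof of Marx's lower bound for \PSI that lets the pattern size be tuned to roughly $\sqrt{N}$; (iii) a composition in which $n$ and $p$ come out as $2^{\Theta(\sqrt{N}\log N)}$, so that $\log n/(\log\log n)^2=\Theta(\sqrt{N}/\log N)$ and a $(np)^{o(\log n/(\log\log n)^2)}$ algorithm decides \textsc{3-Sat} in time $2^{o(N)}$. Your arithmetic in step (iii) is correct and matches the paper's wrap-up.

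The genuine gap is in step (i), which is the technical heart of the result and which you describe only as a list of desiderata (``heavy backbone'', ``padding colors rich enough that every cheating cut is forced to cut essentially all colors, yet lean enough\ldots''). The difficulty is precisely to build padding with the two conflicting properties --- any \emph{two} colors corresponding to the same pattern edge must jointly reconnect the whole vertex set, while \emph{one} color per pattern edge must leave it disconnected --- and no construction is given. The paper achieves this by encoding each host vertex as a vector over $(\mathbb{F}_\rho^*)^a$ for a prime $\rho\approx n^{1/a}$ and attaching, for each color, stars in every coordinate block whose shapes record that vector; two distinct colors from the same part then differ in some coordinate and, because $\rho$ is prime, their stars generate all of $\mathbb{F}_\rho$ in that coordinate and reconnect everything to the apex $t$, whereas an honest selection leaves exactly the chosen encodings cut off. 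This costs $\card{W}=\Oh(h(4a)^a n)=2^{\Oh(a\log a)}\cdot hn$ vertices --- \emph{not} $\mathrm{poly}(h,e)$ as you claim; the paper explicitly treats even getting the blow-up down to $2^{\Oh(a\,\mathrm{polylog}(a))}$ as the main challenge, and it is this $2^{\Theta(a\log a)}$ factor (matching the $3^{\Theta(\sqrt{N}\log N)}$ domain blow-up from the Marx-style embedding) that is responsible for one of the $\log\log n$ factors in the final exponent. Your proposal asserts a strictly stronger, polynomial-size reduction without any construction; as written, the claim is unsupported, and the composition only goes through once some construction with at most quasi-exponential dependence on $a$ is actually supplied.
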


Observe that the lower bound of Theorem~\ref{thm:hard} is tight up to a $\Oh((\log \log n)^2)$ factor in the exponent
with the algorithm of~\cite{GhaffariKP17}, as $\mathrm{OPT} \leq p$ and $n^{\Oh(\log \mathrm{OPT})} = \mathrm{OPT}^{\Oh(\log n)} \leq p^{\Oh(\log n)}$. 
The key ingredient in the proof of Theorem~\ref{thm:hard} is the proof that \CMC is \W[1]-hard
when parameterized by \emph{the number of uncut colors}, that is, the value $p-k$.
\begin{restatable}{theorem}{thmwhard}
\label{thm:w1hard}
The \CMC problem, parameterized by $a := p-k$, the number of uncut colors, is \W[1]-hard.
Furthermore, unless the Exponential Time Hypothesis fails,
  there is no computable function $f$ and an algorithm for the problem with running time bound $f(a) \cdot (np)^{o(a/\log a)}$. 
\end{restatable}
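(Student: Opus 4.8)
The plan is to prove \Cref{thm:w1hard} by a parameterized reduction from a canonical \W[1]-hard problem whose complexity under ETH is well understood, and then to lift that reduction to the stronger ETH bound. A natural starting point is \textsc{Multicolored Clique} (equivalently \textsc{Subgraph Isomorphism} restricted to cliques, or \textsc{Partitioned Subgraph Isomorphism}), which is \W[1]-hard parameterized by the pattern size and, by Marx's framework alluded to in the introduction, has no $f(k)(np)^{o(k/\log k)}$-time algorithm under ETH. However, because the eventual target is \CMC parameterized by the number of \emph{uncut} colors $a = p-k$, I expect the cleaner route is to design a gadget construction where a solution $S$ of cost $k$ corresponds bijectively to a choice of a size-$a$ object in the source instance: the few colors we are \emph{allowed} to keep uncut encode the ``selection'' in the \W[1]-hard instance, while the vast majority of colors are forced to be cut regardless. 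In other words, the reduction must be ``inverted'' relative to the usual picture: the solution's freedom lives in $p-k$, not in $k$.

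**Construction.**

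First I would fix the source problem as \textsc{Subgraph Isomorphism} (or \textsc{Multicolored Clique}) with pattern $H$ on $a$ edges and host $G_0$, aiming to build an edge-colored multigraph $G$ with $|V(G)| = (|V(G_0)|+|E(G_0)|)^{\Oh(1)}$ vertices and $p = (|V(G_0)|+|E(G_0)|)^{\Oh(1)}$ colors, together with a budget $k$ such that $p - k = \Theta(a)$, and such that $G$ has a nontrivial cut using at most $k$ colors iff $H$ embeds in $G_0$. The skeleton of $G$ should be highly connected (e.g., a large clique or expander-like structure on a ``central'' vertex set) so that any nontrivial cut $[S, V\setminus S]$ is forced to be very unbalanced, isolating essentially a single ``choice gadget.'' Each vertex/edge of $H$ gets a selection gadget, a small subgraph attached to the central part; isolating a subset $S$ that corresponds to a valid embedding should leave exactly the $a$ colors encoding the chosen vertices/edges uncut (these colors only appear inside $S$ or inside the untouched central part), while every other color has an edge in $\partial(S)$. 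Consistency between the vertex-selection gadgets and edge-selection gadgets of $H$ is enforced in the standard \textsc{Multicolored Clique} style: an edge-gadget for $uv \in E(H)$ only permits being ``kept uncut'' if the endpoints selected by the $u$- and $v$-gadgets are compatible with that edge, implemented via which colors appear on which crossing edges. Care is needed to rule out ``cheating'' cuts that are unbalanced in a way not corresponding to any gadget: the central structure must be robust enough (connectivity $> p$, or a single color spanning it) that cutting it is never cheaper than respecting the intended gadget boundaries.

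**Lifting to the ETH bound.**

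Once the parameterized reduction is in place with the linear parameter blow-up $a = \Theta(|E(H)|)$ and polynomial size blow-up, the \W[1]-hardness is immediate, and the ETH lower bound follows by composing with the known $f(k)(n)^{o(k/\log k)}$ lower bound for \textsc{Subgraph Isomorphism} parameterized by pattern edges. Concretely, an $f(a)(np)^{o(a/\log a)}$ algorithm for \CMC would, through the reduction, give an $f(|E(H)|) \cdot |V(G_0)|^{o(|E(H)|/\log|E(H)|)}$ algorithm for \textsc{Subgraph Isomorphism}, contradicting ETH. (This is also exactly the place where the paper's promised ``simplified, more versatile'' proof of the \textsc{Subgraph Isomorphism} lower bound will be needed, since we want the source bound stated for the regime where the pattern has $\Theta(a)$ edges and the host has polynomially many vertices.)

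**Main obstacle.**

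I expect the hard part to be the robustness of the central gadget: ensuring that \emph{every} nontrivial cut of $G$ — not just the ``intended'' ones — either has cost $> k$ or corresponds to a legitimate embedding of $H$. In ordinary \W[1]-hardness reductions to cut problems one typically pays colors proportional to the size of the selected object, so a small cut automatically means a small selection; here the logic is reversed, and one must instead argue that a small \emph{number of uncut} colors forces the cut to be ``localized'' to a single consistent gadget configuration. Getting the color assignment so that (i) almost all $p$ colors are unavoidably cut, (ii) the $\approx a$ potentially-uncut colors are in bijection with embeddings of $H$, and (iii) no degenerate or mixed cut sneaks below the budget, is the delicate combinatorial core; the connectivity-amplification trick (blowing up the central part so its cut cost exceeds $p$) is the tool I would lean on, but verifying it interacts correctly with multigraph edge multiplicities and the coloring will be where most of the real work lies.
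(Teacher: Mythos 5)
Your high-level plan matches the paper's: pass to the dual view in which the $a = p-k$ uncut colors are the objects being selected, reduce from \textsc{Subgraph Isomorphism} parameterized by the number $a$ of pattern edges (one color per host edge, the uncut colors selecting one host edge per pattern edge), and compose with Marx's $f(a)\,n^{o(a/\log a)}$ lower bound for that problem. You also correctly locate the crux: forcing an arbitrary set of $a$ uncut colors to correspond to a consistent selection. But the mechanism you propose for this --- a central structure of connectivity exceeding the budget, or a single color spanning it --- does not work in this parameterization, and that is a genuine gap. With budget $k = p-a$, the solution may cut \emph{all but $a$} colors; no amount of edge-connectivity, multigraph multiplicity, or spanning colors makes any cut ``expensive,'' since cost is counted in colors and all but $a$ of them are free. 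The correct reformulation (made explicit in the paper as \DualCMC) is: one must choose $a$ colors whose union of edge sets is a \emph{disconnected} spanning subgraph, and the construction must guarantee that \emph{every} $a$-subset of colors that is not a consistent selection has \emph{connected} union. In particular, any two colors corresponding to the same pattern edge but to different host edges must, between just the two of them, already connect the entire vertex set.

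Achieving that pairwise property is the actual technical core, and it is absent from your proposal. The paper does it with an algebraic padding gadget: the vertices of each block are points of $(\mathbb{F}_\rho \times \{0,\dots,b\})^a$ for a prime $\rho \approx n^{1/a}$, and each color adds, in the coordinate indexed by its pattern edge, stars whose shape encodes the identity of the chosen host edge; two distinct colors in the same group then generate walks that, because $\rho$ is prime, reach the $t$-adjacent base points and connect everything, while a consistent selection leaves the intended component separated from $t$. Note that this forces $|W| = \Oh(h(4a)^a n)$ --- exponential in $a$, not polynomial as you assumed --- which is still fine for $\W[1]$-hardness and for the $f(a)(np)^{o(a/\log a)}$ bound, since the blow-up has the form $f(a)\cdot\mathrm{poly}$. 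Your composition with the \textsc{Subgraph Isomorphism} lower bound is correct once such a gadget is supplied; without it, the reduction has no way to exclude ``mixed'' choices of uncut colors.
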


Theorem~\ref{thm:w1hard} is proven in Section~\ref{sec:hardness}.
A natural step to obtain Theorem~\ref{thm:hard} from Theorem~\ref{thm:w1hard} is to pipeline the reduction
of Theorem~\ref{thm:w1hard} (if it started in the classic \textsc{Multicolored Clique} problem) with a standard reduction
from \textsc{3-CNF SAT} to \textsc{Multicolored Clique} that makes a correct choice of the parameters
(here, from an input $n$-variable $m$-clause formula we would want a \textsc{Multicolored Clique} instance
 with roughly $2^{\Oh(\sqrt{n+m})}$ vertices and looking for a clique of size $\Oh(\sqrt{n+m})$). 
Unfortunately, the reduction of Theorem~\ref{thm:w1hard} makes heavy use of the so-called edge-choice gadgets
and therefore starts instead 
in the \textsc{Subgraph Isomorphism} problem, parameterized by the number of edges of the pattern.
Hardness of this source problem is provided by Marx~\cite{Marx2010}; his reduction is involved
and not directly amenable to such choose-convenient-range-of-parameters tricks.

To cope with this issue, we revisit the framework of Marx~\cite{Marx2010} and provide a simplified and streamlined proof
of the hardness of \textsc{Subgraph Isomorphism} that is now versatile to a choice of parameter range.
The exact statement (and necessary definitions) can be found in Section~\ref{sec:fine:Marx}. 
(We remark that the main goal of the framework of Marx~\cite{Marx2010} was to provide hardness for the \emph{treewidth parameterization}
and the ``number of edges'' parameter was only a side corollary there; 
our simplified proof is tailored for the latter parameter and has no implications for the treewidth parameterization.)

\newcommand\PDualCMC{\textsc{Partitioned Dual Colored Min-Cut}\xspace}
\newcommand\PSIshort{PSI\xspace}
\newcommand\DualCMCshort{DCMC\xspace}
\newcommand\PDualCMCshort{PDCMC\xspace}
\section{Hardness of \DualCMC}\label{sec:hardness}
In this section we give the main reduction that leads to 
the quasi-polynomial time lower bound under the \ETH for \CMC,
and a proof of \cref{thm:w1hard}.
To give the reduction, we slightly change the perspective on the problem.
A computationally expensive case in the algorithm of~\cite{GhaffariKP17} is 
when a large number of colors is incident with a large number of vertices. %
In particular, this forces a high depth of recursion in the execution of the algorithm. 
In this situation, each color is incident with many vertices, so only very few (say at most $a$) colors
do not appear in an optimum solution. 
The algorithm of~\cite{GhaffariKP17} in fact in this case
reverts to brute-forcing all $\binom{p}{a}$ sets of colors that 
are not in the solution (recall that $p$ is the total number of colors).

This scenario reveals that the number of colors \emph{not} present in a minimum colored cut 
is a relevant parameter for understanding the computational complexity of \CMC.
We therefore consider its dual problem, parameterized by solution size.
(Note that finding the maximum-size set of colors that do not belong to some cut of a graph $G$ is equivalent to finding the maximum number of colors such that the union of their edges forms a disconnected subgraph of $G$.)

\defproblem
	{\DualCMC}
	{Vertex set $W$, graphs $G_1, \ldots, G_p$ on vertex set $W$, integer $a$.}
	{\yes if there exists a set $I \subseteq [p]$ with $\card{I} = a$ 
		such that $(W, \cup_{i \in I} E(G_i))$ is disconnected, \no otherwise.}

We will confirm our intuition that indeed, the case described above makes the problem computationally hard, 
by giving a reduction that proves \W[1]-hardness of \DualCMC parameterized by $a$,
and, together with a reduction given in \cref{sec:wrap}, an
$(np)^{o(\log n/(\log\log n)^2)}$ time lower bound for \CMC under the \ETH.
The reduction is from the \PSI problem.

\defproblem
	{\PSI (\textsc{PSI})}
	{Graph $H$, called \emph{pattern}, graph $K$, called \emph{host}, with vertex partition $\uplus\{V_x \mid x \in V(H)\}$
		such that $\card{V_x} = n$ for all $x \in V(H)$.}
	{\yes if there is a set $\{v_x \in V_x \mid x \in V(H)\} \subseteq V(K)$ such that 
		for all $x, y \in V(H)$, $xy \in E(H)$ implies $v_x v_y \in E(K)$, \no otherwise.}
		
To give an overview of the main idea behind the reduction of the following theorem,
we consider for one moment a \emph{partitioned} version of \DualCMC, 
where in addition we are given a partition $(\calG_1, \ldots, \calG_a)$ of $\{G_1, \ldots, G_p\}$
and we require a solution to select one graph from each $\calG_i$.

The following simple reduction from \PSIshort to \PDualCMC (\PDualCMCshort) 
would suffice if we could restrict our attention to this problem
instead of the non-partitioned version.
Let $(H, K)$ be an instance of \PSIshort where $H$ is the pattern graph and $K$ the host graph,
and let $E(H) = \{e_1, \ldots, e_a\}$.
We may assume that $H$ is connected.
We construct an instance $(W, \calG_1, \ldots, \calG_a)$ of \PDualCMCshort as follows.
We let $W = \{t\} \cup V(K)$.
For all $i \in [a]$, let $e_i = xy$, 
and consider the set of edges $E(V_x, V_y) \subseteq E(K)$ with one endpoint in $V_x$ and the other in $V_y$.
For all $v_xv_y \in E(V_x, V_y)$, we add a graph $G(i, v_x, v_y)$ to $\calG_i$ that consists of the edge $v_xv_y$, 
and all edges $tz$, where $z \in V_x \cup V_y \setminus \{v_x, v_y\}$ (cf.~\cref{fig:red:1}).

For one direction, suppose that $X = \{v_x \in V_x \mid x \in V(H)\}$ is a solution to $(H, K)$.
Then, taking the union over all $e_i = xy$ of the graphs $G(i, v_x, v_y)$ (where $v_x, v_y \in X$)
is a solution to $(W, \calG_1, \ldots, \calG_a)$, 
since no vertex in $X$ is adjacent to $t$ in the resulting graph.
For the other direction, let $(G_1 \in \calG_1, \ldots, G_a \in \calG_a)$
be a solution to the \PDualCMCshort instance.
Consider some $x \in V(H)$ that is incident with two edges $e_{i_y} = xy$ and $e_{i_z} = xz$ in $H$.
Let $G_{i_y} = G(i_y, v_x, v_y)$ and $G_{i_z} = G(i_z, w_x, w_z)$.
In $G_{i_y} \cup G_{i_z}$, 
if (i) $v_x = w_x$, then there is a component $C$ not containing $t$, but containing
$v_x=w_x$, $v_y$, and $w_z$, while if (ii) $v_x \neq w_x$, then $t$ is connected to all
vertices of $V_x$, and hence also all vertices of $V_y$ and $V_z$.
Now suppose for a contradiction that the edges in $K$ corresponding to the graphs $\calG_i$ do not give a solution to $(H, K)$.
Then, for at least one vertex $x \in V(H)$, case (ii) has to apply, which allows us to conclude that 
$\bigcup_{i \in [a]} G_i$ was in fact connected as long as $K$ is connected (which can be easily ensured), a contradiction.

The main technical contribution of the following theorem is a fairly non-trivial construction 
that emulates the requirement that we have to choose one graph from each part of the partition $(\calG_1, \ldots, \calG_a)$.
To do so, we equip each graph $G_i \in \calG_i$ with a set of \emph{padding edges} with the following properties.
\begin{enumerate}
	\item For each $i \in [a]$ and distinct $G_{i_1}, G_{i_2} \in \calG_i$, the padding edges in $G_{i_1} \cup G_{i_2}$ connect the entire vertex set~$W$.
	\item For all $G_1 \in \calG_1, \ldots, G_a \in \calG_a$, the padding edges in $\bigcup_{i \in [a]} G_i$ leave $W$ (highly) disconnected.
\end{enumerate}
To make this construction work, we have to add many more vertices to $W$, and 
an additional challenge is to keep the size of $W$ sufficiently small
to obtain the desired lower bound. 
In particular, to obtain a lower bound of the form $(np)^{o(\log n/\mathrm{polyloglog}(n))}$ 
for \CMC, we require the dependence of the size of $W$ on $a$, 
the number of edges in the pattern graph,
to be no more than $2^{\Oh(a \mathrm{polylog}(a))}$.
\begin{theorem}\label{thm:main:reduction}
	There is a reduction that given a \PSI instance $(H, K)$, where
	$H$ is connected, and with
	$\card{V(H)} = h$, %
	$\card{E(H)} = a$,
	and $n$ being the number of vertices in each part of the partition of $V(K)$,
	constructs an equivalent \DualCMC instance $(W, G_1, \ldots, G_p, a)$ with
	$\card{W} = \Oh(h(4a)^a n)$, and $p = \card{E(K)}$
	in time $\card{W}^{\Oh(1)}$.
\end{theorem}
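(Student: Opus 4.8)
\emph{Proof plan.} The plan is to keep the core graphs of the \PDualCMCshort construction sketched above and to replace the partition $(\calG_1,\ldots,\calG_a)$ by a system of \emph{padding edges} that simulates the requirement ``pick exactly one graph from each $\calG_i$''. First I would set $W_0 \df \{t\}\cup V(K)$ and, after the harmless preprocessing that deletes from $K$ all edges inside a part and all edges between $V_x,V_y$ with $xy\notin E(H)$ (which changes neither the answer of the instance nor the fact that $H$ is connected), keep, for every $i\in[a]$ with $e_i=xy$ and every edge $v_xv_y\in E(V_x,V_y)$, the core graph $G(i,v_x,v_y)$ exactly as in the sketch. The core graphs are now in bijection with $E(K)$, so the final instance will have $p=\card{E(K)}$; write $\calG_i$ for the set of core graphs with first coordinate $i$.

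The heart of the construction is the design of the padding. I would take $W \df W_0 \uplus W_{\mathrm{pad}}$, where $W_{\mathrm{pad}}$ is indexed by $V(K)$ together with words of length $a$ over an alphabet of size $4a$ (one coordinate per edge of $H$), so that $\card{W}=\Oh(hn)+\Oh(hn\cdot(4a)^a)=\Oh(h(4a)^a n)$, and attach to each core graph $G(i,v_x,v_y)$ a set $P(i,v_x,v_y)$ of padding edges on $W$ with the two properties:
\begin{enumerate}
\item[(P1)] for every $i$ and any two distinct core graphs of $\calG_i$, the union of their padding edges is a \emph{connected spanning} subgraph of $W$;
\item[(P2)] for every transversal $(G(i,v_x^i,v_y^i))_{i\in[a]}$, in the union of the corresponding padding edges \emph{every connected component contains exactly one vertex of $W_0$}; in particular that union is highly disconnected and has no edge with both endpoints in $W_0$.
\end{enumerate}
The intended mechanism is that $P(i,v_x,v_y)$ only ``acts on'' the $i$-th coordinate of the word: a transversal touches the $a$ coordinates independently and therefore leaves the word-indexed vertices shattered, each hanging as a pendant on its $W_0$-vertex, whereas two paddings from the same part collide on coordinate $i$ in a way that cascades into a single blob spanning all of $W$. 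The size-$4a$ alphabet is precisely what leaves enough room for the collision/disjointness argument behind (P1) and (P2) while still respecting $\card{W_{\mathrm{pad}}}=\Oh(hn\cdot(4a)^a)$; obtaining both properties at once is where essentially all the difficulty lies.

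Granting (P1) and (P2), equivalence with \PSIshort follows from the \PDualCMCshort argument recalled above, because in a transversal no padding edge connects two vertices of $W_0$, so the connectivity of $W_0$ is governed solely by the core edges and the word-indexed vertices are mere pendants. If $\{v_x\mid x\in V(H)\}$ is a \PSIshort solution, then the transversal $\{G(i,v_x,v_y)\mid e_i=xy\}$ together with its padding disconnects $W$: no $v_x$ reaches $t$ through the core edges, and by (P2) the padding neither reconnects $W_0$ nor merges the pieces. Conversely, if some $a$ graphs disconnect $W$, then no two of them lie in the same $\calG_i$, for otherwise (P1) would already make $W$ connected; hence the chosen graphs form a transversal, and feeding it into the \PDualCMCshort analysis shows that if the induced set $\{v_x\}$ were not a \PSIshort solution, then some $x\in V(H)$ would be covered by two core graphs with different choices of $v_x$, which connects $t$ to the entire part $V_x$ and, by connectivity of $H$, to all of $V(K)$; by (P2) every word-indexed vertex then also joins $t$'s component, so $W$ would be connected --- a contradiction. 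Hence $\{v_x\}$ is a \PSIshort solution.

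What remains is bookkeeping: $\card{W}=\Oh(h(4a)^a n)$ and $p=\card{E(K)}$ as noted, each graph has $\card{W}^{\Oh(1)}$ edges, and the whole list is produced in time $\card{W}^{\Oh(1)}$. \textbf{Main obstacle.} The decisive step is the construction of the padding: (P1) wants each pair of same-part paddings to be enormous, while (P2) wants every transversal of paddings to be tiny, and the point is to reconcile these opposing demands inside a vertex set of size only $2^{\Oh(a\log a)}\cdot hn$. I expect the right index set --- here $[4a]^a$ --- and the accompanying disjointness/hashing argument to absorb most of the work, with the two-direction correctness above and the size accounting being routine once the gadget is in place.
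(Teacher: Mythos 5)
There is a genuine gap: the proposal never constructs the padding gadget, and that construction is the entire technical content of the theorem. Your properties (P1) and (P2) are precisely the two properties the paper itself lists as the design goal of the padding (and your correctness argument from them is essentially the \PDualCMCshort sketch that precedes the theorem), so what you have written is a restatement of the problem plus the routine bookkeeping, with the decisive step deferred to ``a disjointness/hashing argument'' over the index set $[4a]^a$ that is never exhibited. Reconciling (P1) and (P2) inside a vertex set of size $2^{\Oh(a\log a)}hn$ is exactly where the theorem is hard, and no existence argument is given; as it stands the proof cannot be checked.

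For comparison, the paper's gadget is quite specific and its architecture differs from yours in one important way. It picks a prime $\rho$ with $\lceil n^{1/a}\rceil<\rho\le 2\lceil n^{1/a}\rceil$, sets $b=2a$, and replaces each part $V_z$ by the coordinate space $W_z=(\mathbb{F}_\rho\times\{0,\ldots,b\})^a$; the vertices of $K$ are not kept as separate vertices of $W_0$ but are \emph{embedded injectively} into the level-$0$ slice $(\mathbb{F}_\rho\times\{0\})^a$ via maps $f_x\colon V_x\to(\mathbb{F}_\rho\setminus\{0\})^a$. The padding of $G(\alpha,v_x,v_y)$ acts only on the $\alpha$-th coordinate of every $W_z$: it attaches $t$ to all vertices whose $\alpha$-th coordinate is $(0,0)$ and adds stars from each level-$0$ value $(r,0)$ to $(r+g_{xy}(v_xv_y)[i],i)$ for $i\in[b]$, where $g_{xy}(v_xv_y)=f_x(v_x)\circ f_y(v_y)$. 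Then (P1) holds because two distinct choices on the same edge give vectors $g,g'$ differing in some coordinate $\gamma$, and alternating the two stars shifts the level-$0$ value by $g[\gamma]-g'[\gamma]\neq 0$, which generates all of $\mathbb{F}_\rho$ by primality and eventually reaches $(0,0)$ and hence $t$; and (P2)-type disconnection holds because a transversal fixes one star shape per coordinate, confining each padding component to a product of $(b+1)$-element sets. Note that even the correct analogue of your (P2) is weaker than what you state: in a transversal the component of $t$ under the padding alone is large (it absorbs every vertex with a $(0,0)$ coordinate), and the embedded $V(K)$-vertices are isolated from $t$ only because $f_x$ avoids $0$ in every coordinate. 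So both the existence of the gadget and the precise form of its disconnection property need to be supplied before the rest of your argument can go through.
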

\begin{proof}
	Let $(H, K)$, $n$, $h$, and $a$ be as in the statement of the theorem.
	We show how to construct an instance $(W, G_1, \ldots, G_p, a)$ of \DualCMC 
	with the claimed properties.
	Throughout the proof,
	for $xy \in E(H)$, we let $E(V_x, V_y) = \{v_xv_y \in E(K) \mid v_x \in V_x, v_y \in V_y\}$.
	
	\newcommand\aprime{\rho}
	We choose a prime $\aprime$ with $\ceil{n^{1/a}} < \aprime \le 2\ceil{n^{1/a}}$ 
	(whose existence can be shown by elementary number theory) 
	and let $b = 2a$.
	Then we have that
	\begin{align}
		\label{eq:prime:p}
		(\aprime-1)^a \ge n \mbox{ and } \aprime^b \ge n^2.
	\end{align}
	We let $\bF_\aprime$ be the field of integers modulo $\aprime$ 
	and let $\bFstar_\aprime$ be its multiplicative subgroup, 
	i.e., $\bFstar_\aprime = \{1, \ldots, \aprime-1\}$.
	For each $x \in V(H)$, we fix an arbitrary injective map $f_x \colon V_x \to (\bFstar_\aprime)^a$.
	Note that the existence of such maps is guaranteed by \eqref{eq:prime:p}.
	For each $xy \in E(H)$, let $g_{xy} \colon V_x \times V_y \to (\bFstar_\aprime)^b$
	be the map defined as $g_{xy}(v_xv_y) = f_x(v_x) \circ f_y(v_y)$.
	
	For each $x \in V(H)$, 
	we let $W_x = (\bF_\aprime \times \{0,\ldots,b\})^a$
	and $\hat{W_x} = (\bF_\aprime \times \{0\})^a \subseteq W_x$,
	and we embed $f_x$ into $\hat{W_x}$ with the following map 
	$\hat{f_x} \colon V_x \to \hat{W_x}$.
	For each $v_x \in V_x$, we let 
	\[\hat{f_x}(v_x) = ((f_x(v_x)[1], 0), \ldots, (f_x(v_x)[a], 0)).\]
	The vertex set of the \DualCMC instance is $W = \{t\} \cup \bigcup_{x \in V(H)} W_x$,
	where $t$ is a new vertex.
	
	\begin{figure}
		\centering
		\includegraphics[width=.8\textwidth]{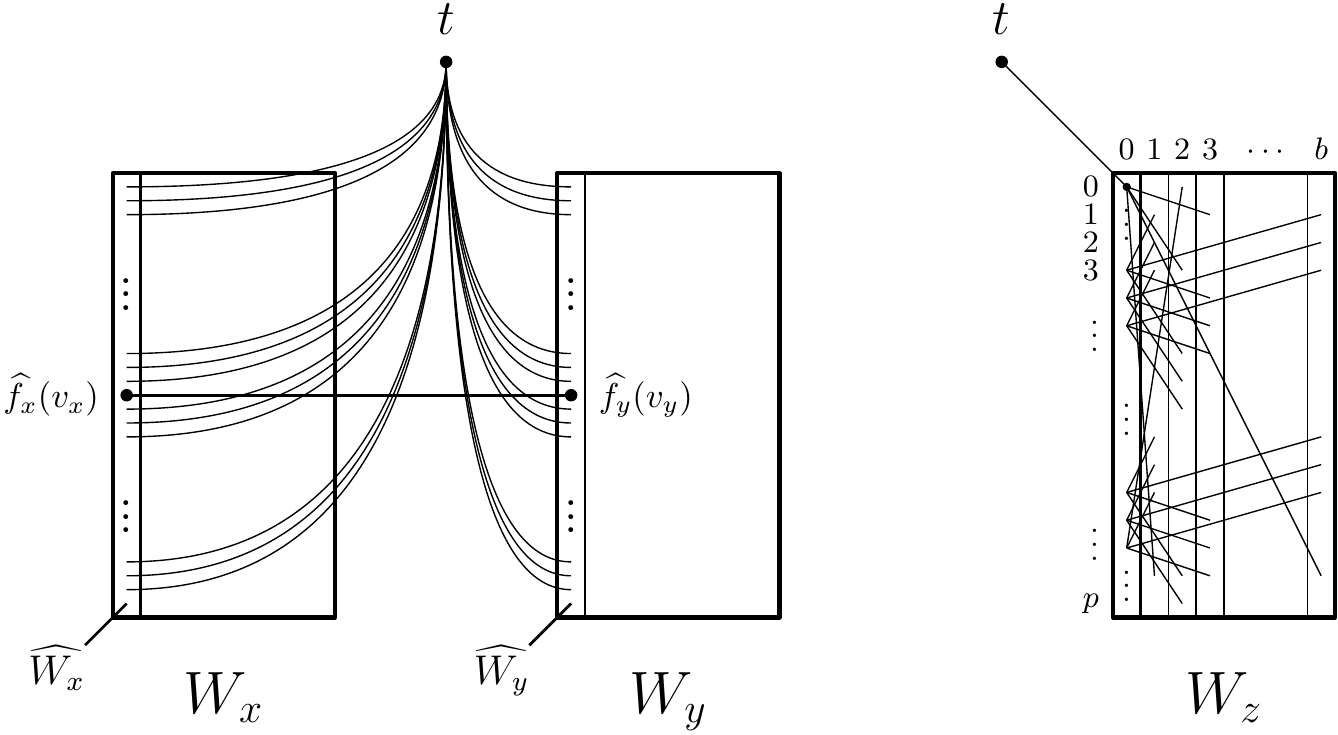}
		\caption{On the left an illustration of a set $A(\alpha, v_x, v_y)$ and on the right 
			an illustration of a padding set $\pad(\alpha, v_x, v_y, z)$.
			Note that the vector $g_{xy}(v_x, v_y)$ determines the shape of the stars 
			in $\pad(\alpha, v_x, v_y, z)$. 
			We would like to remark though that the picture presented here is 
			drastically simplified and should not be interpreted ``literally'', 
			as most dimensions of the elements of the sets considered here do not show.}
		\label{fig:red:1}
	\end{figure}
	
	The graphs of the \DualCMC instance are constructed as follows, see \cref{fig:red:1} for an illustration.
	Throughout the following, let $E(H) = \{e_1, \ldots, e_a\}$.
	Let $\alpha \in [a]$ and $e_\alpha = xy$;
	for each $v_xv_y \in E(V_x, V_y)$,
	we let
	\begin{align}
	    \label{eq:red:Aedges}
		A(\alpha, v_x, v_y) = \{\hat{f_x}(v_x), \hat{f_y}(v_y)\} \cup 
			\left\lbrace
				tz \mid z \in (\hat{W_x} \cup \hat{W_y}) \setminus \{\hat{f_x}(v_x), \hat{f_y}(v_y)\}.
			\right\rbrace
	\end{align}
	We obtain the graph $G(\alpha, v_x, v_y)$ by taking $A(\alpha, v_x, v_y)$ 
	and adding the following set of ``padding'' edges $\pad(\alpha, v_x, v_y, z)$ for each $z \in V(H)$,
	containing the following edges between vertices of $W_z \cup \{t\}$:
	\begin{align}
		\nonumber
		&\forall q_1, \ldots, q_{\alpha-1}, q_{\alpha+1}, \ldots, q_a \in \bF_\aprime \times \{0, \ldots, b\} \colon 
		\{t, (q_1, \ldots, q_{\alpha-1}, (0,0), q_{\alpha+1}, \ldots, q_a)\}, \mbox{ and } \\
		\nonumber
		&\forall q_1, \ldots, q_{\alpha-1}, q_{\alpha+1}, \ldots, q_a \in \bF_\aprime \times \{0, \ldots, b\} 
		~~~\forall r \in \bF_\aprime
		~~~\forall i \in [b] \colon \\
		\label{eq:red:star}
		&~~~~~\{(q_1, \ldots, q_{\alpha-1}, (r, 0), q_{\alpha+1}, \ldots, q_a),
			(q_1, \ldots, q_{\alpha-1}, (r + g_{xy}(v_x, v_y)[i], i), q_{\alpha+1}, \ldots, q_a)\}
	\end{align}
	Note that the edges described in \eqref{eq:red:star} form a star whose center is in $\hat{W_z}$.
	In particular, for each $r \in \bF_\aprime$, we construct one star whose center is $(r, 0)$;
	the shape of each such star is determined by the vector $g_{xy}(v_x, v_y)$.
	The (edge set of the) graph $G(\alpha, v_x, v_y)$ is then obtained as:
	\[
		G(\alpha, v_x, v_y) = A(\alpha, v_x, v_y) \cup \bigcup\nolimits_{z \in V(H)} \pad(\alpha, v_x, v_y, z)
	\]
	We make an easy but crucial observation.
	\begin{observation}\label{obs:red:span}
		Let $\alpha \in [a]$, $e_\alpha = xy$, and $v_xv_y \in E(V_x, V_y)$.
		Then, $G(\alpha, v_x, v_y)$ spans $W$.
	\end{observation}
	
	The graphs $G(\alpha, v_x, v_y)$ make up the graphs of the \DualCMC instance;
	note we have $\card{E(K)}$ of them
	and therefore $m = \card{E(K)}$ as required by the statement of the theorem.
	Moreover, since $b = 2a$ and $\aprime \le 2\ceil{n^{1/a}}$, we have that
	\begin{align*}
		\card{W} = 1 + h(\aprime(b+1))^a = 1 + h \aprime^a (2a+1)^a 
			= \Oh(h (2n^{1/a})^a (2a)^a)
			= \Oh(h 2^a(2a)^a n) 
			= \Oh(h (4a)^a n),
	\end{align*}
	as claimed.
	It remains to prove the correctness of the reduction.
	
	\begin{claim}\label{claim:red:padding:works}
		Let $\alpha \in [a]$, $e_\alpha = xy$, and $v_xv_y, v_x'v_y' \in E(V_x, V_y)$ be distinct.
		Then, 
		\(G(\alpha, v_x, v_y) \cup G(\alpha, v_x', v_y')\) 
		is connected.
	\end{claim}
	\begin{claimproof}
		We show that for each vertex $w \in W \setminus \{t\}$, 
		there is a path from $w$ to $t$ using only the edges 
		from $G(\alpha, v_x, v_y) \cup G(\alpha, v_x', v_y')$.
		To do that, let $z \in V(H)$ be such that $w \in W_z$, 
		and let $q_1, \ldots, q_a \in \bF_\aprime \times \{0,\ldots,b\}$ 
		be such that $w = (q_1, \ldots, q_a)$.
		We will be able to ``move within'' the $\alpha$-th coordinate of $(q_1, \ldots, q_a)$,
		and for better readability we introduce the following shorthand: 
		for $(r, \beta) \in \bF_\aprime \times \{0,\ldots,b\}$, we let
		$(r, \beta)_\alpha$ stand for $(q_1, \ldots, q_{\alpha-1}, (r, \beta), q_{\alpha+1}, \ldots, q_a)$.
		Let $q_\alpha = (r, \beta)$.
		We will show that we can reach $(0,0)_\alpha$, which is adjacent to $t$, from $w = (r, \beta)_\alpha$.
		
		Let $g_{xy}(v_xv_y) = (r_1, \ldots, r_b)$, $r_0 = 0$,  $g_{xy}(v_x'v_y') = (r_1', \ldots, r_b')$, and $r_0' = 0$.
		Since $v_xv_y$ and $v_x'v_y'$ are distinct, there is some $\gamma \in [b]$ such that $r_\gamma \neq r_\gamma'$.
		Let $r^* = r - r_\beta$ 
		and note that in the case $\beta \neq 0$, 
		there is an edge between $(r, \beta)_\alpha$ and $(r^*, 0)_\alpha$ in $G(a, v_x, v_y)$.
		Moreover, the following is a path in $G(\alpha, v_x, v_y) \cup G(\alpha, v_x', v_y')$: 
		\begin{align}
				\label{eq:red:path}
				(r^*, 0)_\alpha, (r^* + r_\gamma, \gamma)_\alpha, (r^* + r_\gamma - r_\gamma', 0)_\alpha, 
					(r^* + 2r_\gamma -r_\gamma', \gamma)_\alpha, (r^* + 2(r_\gamma-r_\gamma'), 0)_\alpha, \ldots
		\end{align}
		Since $\aprime$ is prime and $r_\gamma \neq r_\gamma'$, the path shown in~\eqref{eq:red:path} eventually leads to $(0, 0)_\alpha$,
		and therefore to $t$.
	\end{claimproof}
	
    \begin{claim}\label{claim:red:disconn:to:H:subgraph}
        For each $\alpha \in [a]$ with $e_\alpha = x_\alpha y_\alpha$, let $v_{x_\alpha}v_{y_\alpha} \in E(V_{x_\alpha}, V_{y_\alpha})$.
        If $\calG = \bigcup_{\alpha \in [a]} G(\alpha, v_{x_\alpha}, v_{y_\alpha})$ is disconnected,
        then $(H, K)$ is a \yes-instance.
    \end{claim}
    \begin{claimproof}
        To prove the claim, we need to show that $U = \{v_{x_\alpha} \mid \alpha \in [a]\}$ contains exactly one vertex from each $V_x$ where $x \in V(H)$.
        The existence of the necessary edges to show the $H$-subgraph in $K$ is then guaranteed by the choice made in the statement of the claim.
        To do so, we show that all vertices in 
        $\hat{U} = \{\hat{f_{x_\alpha}}(v_{x_\alpha}) \mid \alpha \in [a]\}$
        are in the same connected component of $\calG$, and in particular that this component does not contain $t$.
        This yields the desired property, for if there was some $x \in V(H)$ such that $\card{U \cap V_x} \ge 2$, 
        then for $\hat{W_x}$, $\card{\hat{U} \cap \hat{W_x}} \ge 2$, 
        which implies that for some distinct $\alpha, \alpha' \in [a]$,
        and distinct $v_x, v_x' \in V_x$, 
        $\calG$ contains both $G(\alpha, v_x, v_y)$ and $G(\alpha', v_x', v_{y'})$ for some $y \neq y'$ and $v_y \in V_y$ and $v_{y'} \in V_{y'}$.
        As $\hat{f_x}(v_x)$ is adjacent to $t$ in $G(\alpha', v_x', v_{y'})$ since $v_x \neq v_x'$, we obtain a contradiction.
        
        Since $\calG$ is disconnected, there is some $z \in V(H)$ such that $W_z$ contains a vertex 
        that is in a different connected component than $t$; call this vertex $\w_0$.
		We show that $\w_0$ has a path to a vertex in $\hat{W_z}$ in $\calG$.
		For all $\alpha \in [a]$, we let $g_{x_\alpha y_\alpha}(v_{x_\alpha}, v_{y_\alpha}) = (r_1^\alpha, \ldots, r_b^\alpha)$ and $r_0^\alpha = 0$.
        Let $\w_0 = ((s_1, \gamma_1), \ldots, (s_a, \gamma_a)) \in \bF_\aprime \times \{0, \ldots, b\}$.
        For all $\alpha \in [a]$, 
        let 
        \[
            \w_\alpha = ((s_1 - r^1_{\gamma_1}, 0), \ldots, (s_\alpha - r^\alpha_{\gamma_\alpha}, 0), 
                (s_{\alpha+1}, \gamma_{\alpha+1}), \ldots, (s_a, \gamma_a)).
        \]
        Then, $\w_0, \w_1, \ldots, \w_a$ is a path in $\calG$, since for all $\alpha \in [a]$, the edge $\w_{\alpha-1} \w_\alpha$ is in 
        $G(\alpha, v_{x_\alpha}, v_{y_\alpha})$.
        Moreover, $\w_a \in \hat{W_z}$. 
        For each $\alpha$ such that $z$ is an endpoint of $e_\alpha = x_\alpha y_\alpha$, assume $z = x_\alpha$,
        only one vertex in $\hat{W_z}$ is not adjacent to $t$ 
        in $G(\alpha, x_\alpha, y_\alpha)$;
        namely $\hat{f_{x_\alpha}}(v_{x_\alpha})$, see \eqref{eq:red:Aedges}.
        Therefore, $\w_a = \hat{f_{x_\alpha}}(v_{x_\alpha})$ 
        for \emph{every} such $\alpha$.
        This allows us to furthermore conclude that $\hat{f_{y_\alpha}}(v_{y_\alpha})$, which is connected to $\hat{f_{x_\alpha}}(v_{x_\alpha})$ by an edge in $G(\alpha, v_{x_\alpha}, v_{y_\alpha})$, see \eqref{eq:red:Aedges}, is in the same connected component as $\w_0$, and therefore disconnected from $t$.
        Since $H$ is connected, we can iteratively apply this argument to conclude that all vertices in 
        $\hat{U} = \{\hat{f_{x_\alpha}}(v_{x_\alpha}) \mid \alpha \in [a]\}$ 
        are in the same connected component of $\calG$ that does not contain $t$, 
        which concludes the proof by the above discussion.
    \end{claimproof}
    
    \begin{claim}\label{claim:red:H:subgraph:to:disconn}
        If $\{v_x \mid x \in V(H)\}$ is an $H$-subgraph in $K$, 
        then $\hat{U} = \{\hat{f_x}(v_x) \mid x \in V(H)\}$ are disconnected from $t$
        in $\calG = \bigcup_{\alpha \in [a], e_\alpha = x_\alpha y_\alpha} G(\alpha, v_{x_\alpha}, v_{y_\alpha})$.
    \end{claim}
    \begin{claimproof}
		Suppose for a contradiction that there is some vertex in~$\hat{U}$ 
		that is connected by a path to~$t$.
		In particular, let $x \in V(H)$ be such that there is such a vertex $\hat{f_x}(v_x) = \hat{u}$ 
		that has a path $P$ to $t$ in $\calG$ that is fully contained in $W_x \cup \{t\}$.
		Let $\w$ be the other endpoint of the edge in $P$ that is incident with $t$.
		By construction we have that $\w \neq \hat{u}$.
		Let $\w = ((r_1, 0), \ldots, (r_a, 0))$ and $\hat{u} = ((r_1^*, 0), \ldots, (r^*_a, 0))$.
		Since $\w \neq \hat{u}$, there is some $\alpha \in [a]$ such that 
		$r_\alpha \neq r^*_\alpha$.
    	Since there is no edge $v_{x_\alpha}'v_{y_\alpha}' \in E(V_x, V_y)$ with $v_{x_\alpha}v_{y_\alpha} \neq v_{x_\alpha}'v_{y_\alpha}'$ 
    	such that $G(\alpha, v_{x_\alpha}', v_{y_\alpha}') \subseteq \calG$, we conclude that all vertices reachable from $\w$ in $\calG$ have the value $(r_\alpha + g_{x_\alpha y_\alpha}(v_{x_\alpha}v_{y_\alpha})[\beta], \beta)$
    	for some $\beta \in \{0,\ldots,b\}$ in the $\alpha$-th coordinate,
    	which means in particular for $\beta = 0$ that their $\alpha$-th coordinate is $(r_\alpha, 0) \neq (r_\alpha^*, 0)$, and therefore such a path $P$ cannot exist, a contradiction.
    \end{claimproof}
    
    To conclude the correctness proof, suppose that $(H, K)$ is a \yes-instance. Then \cref{claim:red:H:subgraph:to:disconn} implies that $(W, G_1, \ldots, G_p, a)$ is a \yes-instance as well.
    Conversely, suppose that $(W, G_1, \ldots, G_p, a)$ is a \yes-instance.
    By \cref{claim:red:padding:works}, we know that a solution to the instance must be of the form as required by \cref{claim:red:disconn:to:H:subgraph} which yields that $(H, K)$ is a \yes-instance.
\end{proof}

The following theorem due to Marx~\cite{Marx2010} 
has an immediate consequence for the parameterized complexity of \DualCMC.

\begin{theorem}[Marx~\cite{Marx2010}]\label{thm:marx}
	\PSI where the host graph has $n$ vertices and the pattern graph $h$ edges
	parameterized by $h$ is \Wone-hard,
	and cannot be solved in time $f(h)n^{o(h/\log h)}$, for any computable function $f$,
	unless the \ETH fails.
\end{theorem}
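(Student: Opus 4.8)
The plan is to give one reduction from $3$-\textsc{SAT} that yields the \ETH lower bound, and to observe that plain \Wone-hardness follows from an easy reduction from $k$-\textsc{Clique}. For the \ETH part, first apply the Sparsification Lemma together with a standard variable-splitting step, so that we may assume the input is a $3$-CNF formula $\varphi$ with $N$ variables and $M=\Oh(N)$ clauses in which every variable occurs a bounded number of times; it then suffices to rule out a $2^{o(N)}$-time algorithm. The decisive choice is the pattern: for a parameter $q$ that will be fixed to a large constant only at the very end (depending on how fast the hypothetical running-time exponent vanishes), let $H$ be a fixed cubic expander on $k=\Theta(q)$ vertices, so $h:=\card{E(H)}=\Theta(k)$, the diameter of $H$ is $D=\Oh(\log k)=\Oh(\log h)$, and $H$ admits good routing: any demand in which each vertex is the source and sink of at most $\ell$ requests can be realized by $H$-paths of length $\Oh(D)$ using each vertex at most $\Oh(\ell D)$ times.

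\textbf{The host.} Fix a balanced partition of the variables and clauses of $\varphi$ into $k$ groups $Z_u$, one per $u\in V(H)$, each of size $\Theta((N+M)/h)$; by bounded frequency each $Z_u$ is incident with only $\Oh((N+M)/h)$ variable--clause pairs. For each variable $y\in Z_u$ and each group $Z_v$ containing a clause that uses $y$, fix via the routing an $H$-path $P$ from $u$ to $v$ of length $\Oh(D)$, and give every internal vertex of $P$ a slot meant to carry the value of $y$. A \emph{state} at $u$ is an assignment of the variables of $Z_u$, a choice of satisfying literal for each clause of $Z_u$, and consistent values for all routing slots at $u$ (the first slot of a path out of $u$ must equal the true value of its variable, and for each clause of $Z_u$ the incoming routed values together with the chosen literal must satisfy that clause); the part $V_u$ of the host is the set of states at $u$, padded to a common size $n$. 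By the congestion bound every vertex carries $\Oh((N+M)D/h)=\Oh((N+M)\log h/h)$ bits, so $n=2^{\Oh((N+M)\log h/h)}$. Finally, for $uv\in E(H)$ we join a state of $u$ to a state of $v$ in the host exactly when they agree on every routing slot shared by the two endpoints of a path through $uv$. The whole reduction runs in time polynomial in the output size $\Oh(hn)$.

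\textbf{Correctness and the bounds.} A satisfying assignment of $\varphi$, together with a fixed satisfying literal per clause, determines one state per group; filling the slots along the routing paths with the corresponding variable values shows the chosen states form a \PSI solution (the parts $V_u$ are disjoint, so one pattern vertex lands in each part and no collision occurs). Conversely, any \PSI solution is forced by the edge constraints to keep each routed value constant along its path, so the state chosen for a clause's group checks that clause against the genuine assignment; reading the assignments off the chosen states therefore yields a satisfying assignment. For the running-time bound, if \PSI admitted an $f(h)\,n^{g(h)}$-time algorithm with $g(h)=o(h/\log h)$, then on the instance above it would run in time $f(h)\cdot 2^{g(h)\cdot\Oh((N+M)\log h/h)}=\Oh(1)\cdot 2^{\Oh(N+M)\cdot(g(h)\log h/h)}$, since $h$ is constant; choosing $q$ (hence $h$) large makes $g(h)\log h/h$ smaller than any $\varepsilon>0$, so $3$-\textsc{SAT} would be solvable in time $2^{\varepsilon(N+M)}$ for every $\varepsilon>0$, hence in $2^{o(N)}$, contradicting \ETH. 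For \Wone-hardness, reduce from $k$-\textsc{Clique} on a loopless graph $G$: take $H=K_k$ (so $h=\binom{k}{2}$ is a function of the parameter), let every part be a copy of $V(G)$, and put $v_xv_y$ into the host iff $v_xv_y\in E(G)$; a \PSI solution is then precisely a $k$-clique, and $(G,k)\mapsto(K_k,\text{host},\binom{k}{2})$ is an FPT-reduction.

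\textbf{Main obstacle.} The step I expect to be delicate is the routing layer: one must choose the expander and the system of connecting paths so that the per-vertex congestion is genuinely $\Oh((\operatorname{diam}H)\cdot(\text{average}))$ --- otherwise the host grows past $2^{\Oh((N+M)\log h/h)}$ and the \ETH calculation collapses --- while simultaneously verifying that the purely local, single-edge constraints available in a bounded-degree pattern really do propagate each routed value faithfully and pin down clause satisfaction. Extracting a clean, quotable routing lemma for bounded-degree expanders and organizing the slot bookkeeping so that it survives the bounded-degree restriction is where the real work lies. I would also point out that the only free quantity in the construction is the size/degree profile of the expander (equivalently the group size $\Theta((N+M)/h)$), and it is exactly this freedom that makes the argument versatile across the whole range of $h$ versus $n$.
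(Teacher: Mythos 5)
Your proposal is correct and follows essentially the same route as Marx's original argument and as the paper's own fine-tuned version in Sections~\ref{sec:fine:Marx}--\ref{sec:wrap}: reduce from (sparsified) \ThreeSat, embed the formula's incidence structure into a constant-degree expander pattern with per-vertex congestion $\Oh((N+M)\log h/h)$, blow up each pattern vertex into a domain of size $2^{\Oh((N+M)\log h/h)}$ carrying the routed variable values, and trade off $h$ against $n$ (with the standard $K_k$-pattern reduction from \textsc{Clique} giving plain \Wone-hardness). The routing lemma you flag as the main obstacle is exactly what the paper supplies in Theorem~\ref{thm:embedding} (concurrent multicommodity flow on an expander computed by an LP, path sampling, and a Chernoff bound to get congestion $\Oh(\ell\log\ell)$), so your plan is sound modulo that one standard ingredient.
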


Note that we can always assume that $H$ is connected in the previous theorem. 
If not, then we add a new color class to $H$ containing a single universal vertex $x_H$,
as well as a universal vertex $x_K$ to $K$ with $V_{x_H} = \{x_K\}$.
Therefore, \cref{thm:main:reduction,thm:marx} imply:
\thmwhard*

To obtain the quasi-polynomial time lower bound for \CMC under the \ETH, 
we need some more ingredients proved in \cref{sec:fine:Marx};
and we wrap it up in \cref{sec:wrap}.

\section{Fine-tuned reduction of Marx}
\label{sec:fine:Marx}

Two connected subgraphs $H_1$ and $H_2$ of a graph $H$ \emph{touch}
if they share a vertex or if there is an edge of $H$ with one endpoint in $V(H_1)$
and another endpoint in $V(H_2)$. 
An \emph{embedding} of a graph $G$ in a graph $H$ is an assignment $\phi$
that assigns to every $v \in V(G)$ a nonempty connected subgraph $\phi(v)$ in $H$
(called the \emph{branch set of $v$})
such that for every edge $uv \in E(G)$, the subgraphs $\phi(u)$ and $\phi(v)$ touch. 
The \emph{depth} of an embedding $\phi$
is $\max_{x \in V(H)} |\{v \in V(G)~|~x \in \phi(v)\}|$, that is, the maximum 
number of subgraphs $\phi(v)$ that meet in a single vertex of $H$. 
In literature, such embeddings are sometimes also called \emph{congested minor models}
and depth is called \emph{congestion} or \emph{ply}. 

In this section, we prove the following version of Theorem~3.1 of~\cite{Marx2010}. 
This version is more fine-tuned for our application: a lower bound for binary CSPs parameterized by the number of constraints. 

\begin{theorem}\label{thm:embedding}
There exists a constant $C$ and a polynomial-time algorithm that, given an integer $k \geq 2$ and a graph $G$ with $n$ vertices and $m$ edges,
outputs a graph $H$ with $|V(H)|+|E(H)| \leq k$ and an embedding of $G$ into $H$. 
Furthermore, with probability at least $0.5$, the depth of the output embedding is at most 
\[ C\left(1+k^{-1}(n+m)\right) \cdot \log k.\]
\end{theorem}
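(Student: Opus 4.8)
The plan is to prove Theorem~\ref{thm:embedding} by a randomized recursive partitioning of $V(G)$, building $H$ level by level as a tree-like contraction structure. First I would observe that the target bound has two regimes: when $n+m \le k$ the bound $C(1 + k^{-1}(n+m))\log k = \Oh(\log k)$ must be achieved, so essentially we want an embedding of depth $\Oh(\log k)$ into a graph of size $k$, whereas when $n+m \gg k$ the bound degrades gracefully to $\Oh(k^{-1}(n+m)\log k)$, i.e.\ we pay one extra $(n+m)/k$ factor. This suggests the following strategy: fix a balanced binary tree $T$ with roughly $k$ nodes (so depth $\Oh(\log k)$), and recursively split $V(G)$ among the subtrees, assigning at the leaves groups of roughly $(n+m)/k$ vertices of $G$. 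The host graph $H$ will be (a subdivision/augmentation of) $T$; the branch set $\phi(v)$ of a vertex $v \in V(G)$ will be a connected subgraph of $H$ routing from the leaf bucket of $v$ up toward the leaves holding its neighbours.

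The key steps, in order, are: (i) Set up the recursion. Partition $V(G)$ into two halves, trying to balance both the number of vertices and the number of cut edges; by a probabilistic argument (random balanced bipartition, or iterative refinement), with good probability the number of edges crossing the partition is at most a constant fraction of $m$ plus lower-order terms, while the vertex sets are balanced. Recurse on each side with half the budget $k$, stopping when the budget is $\Oh(1)$ and the piece of $G$ has $\Oh((n+m)/k)$ vertices and edges. (ii) Assemble $H$ as the recursion tree, with each internal node contributing $\Oh(1)$ vertices/edges, so $|V(H)| + |E(H)| = \Oh(k)$; then rescale $k$ by the hidden constant so the bound is exactly $\le k$. (iii) Define the embedding: each $v \in V(G)$ gets a branch set consisting of its home leaf together with the path up the tree to the level where it first needs to meet a neighbour across a cut; crossing edges of $G$ at level $i$ of the recursion are handled by routing both endpoints up to the common ancestor at that level. (iv) Bound the depth: a vertex of $H$ at recursion level $i$ is used by $\phi(v)$ only if $v$'s home leaf is a descendant and $v$ has an incident edge that was cut at level $\le i$; by the balance guarantee the number of such $v$ at a fixed node is $\Oh(\log k)$ from the ``native'' contribution plus $\Oh(k^{-1}(n+m))$ from the edges that had to be routed through, and summing/maximizing over the $\Oh(\log k)$ levels gives the claimed $C(1 + k^{-1}(n+m))\log k$.

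The main obstacle will be step~(iv), controlling congestion: naively, the number of branch sets passing through an internal node of $H$ could blow up, because every $G$-edge cut at or below that node forces a routing path through it. The fix is the standard but delicate Marx-style counting: one charges each unit of congestion at a node either to a vertex whose home leaf lies below (there are at most $\Oh(\log k)$ levels, each contributing $\Oh(1)$ per $G$-vertex after normalizing the leaf bucket size to $\Oh((n+m)/k)$) or to a cut edge, and uses the fact that each cut edge is cut at exactly one level and contributes to congestion along only one root-to-ancestor path, so the total edge-contribution at any single node is $\Oh((n+m)/k)$ after dividing the $m$ edges among the $\Theta(k)$ nodes of a given level. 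Getting the $\log k$ factor (and not $\log^2 k$) requires that the per-level congestion not accumulate a second logarithm — this is exactly where one needs the subgraphs at each level to be chosen so that a vertex's branch set is active on only a bounded number of levels, or alternatively a more careful amortization. I would also need to double-check the probabilistic balancing lemma: that a random (or greedily adjusted) balanced bipartition cuts only an $\Oh(1)$-fraction of edges \emph{in expectation at every level simultaneously}, which by a union bound over the $\Oh(k)$ recursive calls and Markov's inequality succeeds with probability $\ge 0.5$ after adjusting constants. The rest — assembling $H$, verifying it is a valid embedding, and the arithmetic of the depth bound — is routine.
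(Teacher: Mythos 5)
There is a genuine gap, and it lies exactly where you flagged trouble, in step~(iv): your host graph is (essentially) a balanced binary tree, and a tree cannot support an embedding of this depth when $G$ has no small balanced separators. The hard instances here are precisely bounded-degree graphs $G$ that are (close to) expanders --- the incidence graph of a 3-CNF formula after the degree-reduction step is such a graph. For such a $G$, \emph{every} balanced bipartition of $V(G)$ cuts $\Omega(m)$ edges, so at the top level of your recursion $\Omega(m)$ edges of $G$ are cut, and by your own routing rule each of them forces both endpoints' branch sets up to the common ancestor, i.e.\ through the $\Oh(1)$ nodes of $H$ near the root. Your claim that the cut edges at a given level distribute over ``the $\Theta(k)$ nodes of a given level'' only holds at the bottom level; level $i$ has about $2^i$ nodes but still receives a constant fraction of $m$ cut edges, so the congestion at the top is $\Omega(m)$, not $\Oh((1+m/k)\log k)$. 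This is not a matter of sharpening the amortization or the balancing lemma: one can show directly (via a centroid edge of the tree and the fact that each host vertex lies in at most $d$ branch sets) that embedding a bounded-degree $n$-vertex expander into any tree forces depth $\Omega(\sqrt{n})$, which already exceeds the target bound for, say, $k = n/\log^2 n$.

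The fix is that the host graph $H$ itself must be an expander, and the routing must spread load globally rather than hierarchically. The paper's proof (following Marx) takes $H$ to be a constant-degree $\delta$-expander on $\ell = \lfloor k/4\rfloor$ vertices, invokes the sparsest-cut/concurrent-flow duality of Leighton--Rao and Feige--Hajiaghayi--Lee to obtain a concurrent unit flow of congestion $\Oh(\ell\log\ell)$, assigns the vertices of $G$ to buckets of size $\Oh(1+n/\ell)$, and for each cut edge $xy$ routes $\phi(x)$ and $\phi(y)$ to a \emph{uniformly random} intermediate vertex $Z(xy)$ along paths sampled from the flow; a Chernoff bound plus a union bound over the vertices of $H$ then gives congestion $\Oh((1+k^{-1}n)\log k)$ with probability at least $0.8$. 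The random intermediate destinations and the flow-based path distributions are the two ingredients your proposal is missing, and they are what eliminates the root bottleneck inherent in any tree-based scheme.
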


We start with a few simplification steps in the proof of Theorem~\ref{thm:embedding}.
\begin{enumerate}
\item For $k = \Oh(1)$, the algorithm of Theorem~\ref{thm:embedding} can output just a single-vertex
graph $H$ and embed the whole $G$ into that vertex. Thus, we can assume $k$ is larger
than any fixed constant, in particular, $k \geq 8$. 
\item Our constructed graph $H$ will have maximum degree at most $3$ and will 
satisfy $|V(H)| \geq \lfloor k/4 \rfloor \geq k/8$, so any isolated vertices of $G$ can be equidistributed across the vertices of $H$
adding at most $1+k^{-1}n$ to the congestion. Thus, we can assume $G$ has no isolated vertices.\label{p:requirek4}
\item We can assume $G$ has maximum degree at most $3$: for every vertex $v$ of degree $\mathrm{deg}_G(v) > 3$, we replace $v$ with a cycle on $\mathrm{deg}_G(v)$ vertices
and equidistribute the edges incident to $v$ among the vertices of the cycle. This increases $n+m$ by at most a factor of $3$.
Since the original graph $G$ is a minor of the modified one, it suffices to find an embedding of the modified graph.
\item Finally, if at this point $n + m \leq k$, then we can output $H=G$ (and, if $|V(G)| < \lfloor k/4 \rfloor$, a number of isolated vertices to have $\lfloor k/4 \rfloor$ vertices in $G$
   as it is required in Point~\ref{p:requirek4} above)
and a trivial embedding, so assume $n + m > k$. 
\end{enumerate}

In what follows we will need the following variant of Chernoff bound (this is the same as used in~\cite{Marx2010}). 
\begin{theorem}[Chernoff bound]\label{thm:Chernoff}
Let $X_1,\ldots,X_n$ be independent random variables with values in $\{0,1\}$.
Let $X = \sum_{i=1}^n X_i$, let $\mu = \mathbb{E} X$, and let $r \geq \mu$ be a real.
Then, 
  \[ \mathrm{Pr} \left(X > \mu + r\right) \leq \left(\frac{e \mu}{r}\right)^r. \]
\end{theorem}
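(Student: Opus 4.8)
The plan is to use the standard exponential-moment (Chernoff) method, followed by one elementary inequality to pass from the sharp form of the tail bound to the form stated in the theorem.

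First I would apply Markov's inequality to $e^{\lambda X}$ for an arbitrary real $\lambda > 0$: since the event $X > \mu + r$ coincides with $e^{\lambda X} > e^{\lambda(\mu+r)}$, we obtain
\[
  \mathrm{Pr}(X > \mu + r) \leq e^{-\lambda(\mu+r)} \, \mathbb{E}\bigl[e^{\lambda X}\bigr].
\]
Writing $p_i = \mathrm{Pr}(X_i = 1)$, so that $\mu = \sum_{i=1}^n p_i$, independence of the $X_i$ gives $\mathbb{E}[e^{\lambda X}] = \prod_{i=1}^n \bigl(1 - p_i + p_i e^{\lambda}\bigr) = \prod_{i=1}^n \bigl(1 + p_i(e^{\lambda}-1)\bigr)$, and using $1 + x \leq e^x$ termwise this is at most $\exp\bigl((e^{\lambda}-1)\mu\bigr)$. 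Combining,
\[
  \mathrm{Pr}(X > \mu + r) \leq \exp\bigl((e^{\lambda}-1)\mu - \lambda(\mu+r)\bigr).
\]

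Next I would optimize the exponent over $\lambda$. Differentiating $(e^{\lambda}-1)\mu - \lambda(\mu+r)$ in $\lambda$ and setting the result to zero gives $e^{\lambda} = (\mu+r)/\mu$, i.e.\ $\lambda = \ln(1 + r/\mu)$, which is strictly positive since $r > 0$, so Markov's inequality was applied in the right direction. (If $\mu = 0$ then $X = 0$ almost surely and the claim is trivial, so assume $\mu > 0$.) The second derivative $e^{\lambda}\mu$ is positive, so this is a genuine minimum; substituting it, the exponent becomes $r - (\mu+r)\ln\bigl((\mu+r)/\mu\bigr)$, whence
\[
  \mathrm{Pr}(X > \mu + r) \leq e^{r}\left(\frac{\mu}{\mu+r}\right)^{\mu+r}.
\]

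Finally I would deduce the stated bound from this sharper one: it suffices to show $e^{r}\bigl(\mu/(\mu+r)\bigr)^{\mu+r} \leq (e\mu/r)^{r}$, which after cancelling the factor $e^r$ and clearing denominators is exactly $\mu^{\mu} r^{r} \leq (\mu+r)^{\mu+r}$. Taking logarithms, this reads $\mu \ln \mu + r \ln r \leq (\mu+r)\ln(\mu+r)$, which follows by adding $\mu \ln \mu \leq \mu \ln(\mu+r)$ and $r \ln r \leq r \ln(\mu+r)$, each holding because $\ln$ is increasing and $\mu, r \leq \mu + r$. This completes the argument. There is no real obstacle here; the only points meriting a word of care are the positivity of the optimal $\lambda$ and the elementary inequality $\mu^{\mu}r^{r} \leq (\mu+r)^{\mu+r}$. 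Note that the hypothesis $r \geq \mu$ is not needed in the proof — the bound holds for every $r > 0$ — and is stated presumably only because $(e\mu/r)^r$ is a nontrivial (sub-$1$) bound only for $r > e\mu$.
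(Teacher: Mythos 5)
Your proof is correct, and it follows the same exponential-moment route as the paper, but it is worth recording the two points where you diverge. The paper does not re-derive the moment bound: it simply cites the standard multiplicative Chernoff bound $\Pr\left(X>(1+\delta)\mu\right)\le\left(e^{\delta}/(1+\delta)^{1+\delta}\right)^{\mu}$ with $\delta=r/\mu$ and then massages it, first weakening $e^{\delta\mu}$ to $e^{(1+\delta)\mu}$ to reach $\left(e\mu/(\mu+r)\right)^{\mu+r}$ and then bounding that by $(e\mu/r)^{r}$; this last inequality genuinely needs $\delta\ge 1$ (for instance with $\mu=1$, $r=0.1$ one has $\left(e\mu/(\mu+r)\right)^{\mu+r}\approx 2.7 > 1.4\approx (e\mu/r)^{r}$), so in the paper's chain the hypothesis $r\ge\mu$ is actually used, not decorative. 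You instead prove the moment bound from scratch (Markov applied to $e^{\lambda X}$, independence, $1+x\le e^{x}$, and optimizing at $\lambda=\ln(1+r/\mu)$), which lands you on the sharper intermediate bound $e^{r}\left(\mu/(\mu+r)\right)^{\mu+r}$ — the same quantity the cited bound gives before the paper's lossy step — and your final comparison reduces to $\mu^{\mu}r^{r}\le(\mu+r)^{\mu+r}$, valid for all $\mu,r>0$. So your closing remark is right for your argument: it is self-contained and shows the stated bound for every $r>0$, while the paper's version is shorter because it leans on the textbook inequality but pays for it by invoking $r\ge\mu$ in its last step.
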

\begin{proof}
The case $\mu = 0$ is trivial, so assume $\mu > 0$.
Let $\delta = r/\mu \geq 1$. 
We use the standard Chernoff bound as in Wikipedia:
\[ \mathrm{Pr}\left(X > (1+\delta)\mu\right) \leq \left(\frac{e^\delta}{(1+\delta)^{1+\delta}}\right)^\mu. \]
Hence
\begin{align*}
\mathrm{Pr}\left(X > \mu+r\right) &= \mathrm{Pr}\left(X > (1+\delta)\mu \right) \leq \left(\frac{e^\delta}{(1+\delta)^{1+\delta}}\right)^\mu \\
  & \leq \left(\frac{e}{1+\delta}\right)^{\mu(1+\delta)} = \left(\frac{e\mu}{\mu+r}\right)^{\mu+r} \leq \left(\frac{e\mu}{r}\right)^r.
\end{align*}
\end{proof}

\subsection{Concurrent flows and expanders}

For a constant $\alpha > 0$, we say that a (multi)graph $H$ is an \emph{$\alpha$-expander}
if for every nonempty set $S \subseteq V(G)$ of size at most $|V(G)|/2$, the number
of edges with exactly one endpoint in $S$ is at least $\alpha |S|$. 
For the choice of the graph $H$ in Theorem~\ref{thm:embedding}, we can rely on any polynomial-time construction of a constant-degree expander.
We encapsulate it in the following standard claim.

\begin{theorem}\label{thm:expanders}
There exists a constant $\delta>0$ and polynomial-time algorithm that, given an integer $\ell \geq 1$,
outputs a (simple) graph $H_\ell$ on $\ell$ vertices that is a $\delta$-expander
and has maximum degree at most $3$.
\end{theorem}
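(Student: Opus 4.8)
The plan is to give a fully explicit three-stage construction: first dispatch $\ell$ below an absolute constant by brute force; then take a classical constant-degree expander family as a scaffold; then blow up each scaffold vertex into a short path, which at once lowers the maximum degree to $3$ and brings the vertex count to exactly $\ell$. For $\ell$ at most a fixed $\ell_0$ (to be taken as $6$), I would output any connected graph on $\ell$ vertices of maximum degree at most $3$ --- a path, or a single vertex or edge in the two degenerate cases: a single vertex satisfies the expansion condition vacuously, and for a connected graph on $\ell \leq \ell_0$ vertices every nonempty $S$ with $|S| \leq \ell/2$ has at least one crossing edge, hence edge expansion at least $2/\ell_0$. So from now on assume $\ell \geq \ell_0$.

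For the scaffold I would invoke the Margulis--Gabber--Galil construction, which for every integer $m \geq 1$ produces in polynomial time an $8$-regular multigraph on $\Z_m \times \Z_m$ whose second largest adjacency eigenvalue is bounded away from $8$ by an absolute constant; since a spectral gap implies edge expansion and self-loops lie in no edge boundary, discarding them leaves a graph $Y$ on $n_0 := m^2$ vertices, with at most $8$ edges at each vertex, and edge expansion at least an absolute constant $c > 0$. I would pick $m := \lfloor \sqrt{\ell/6}\,\rfloor \geq 1$, so that $6 n_0 \leq \ell < 6(m+1)^2 \leq 24 n_0$ (the last step using $m \geq 1$), i.e.\ $\ell/n_0 \in [6,24)$.

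Next I would replace each vertex $v$ of $Y$ by a path $P_v$ on $\ell_v$ vertices, where the sizes $\ell_v \in \{\lfloor \ell/n_0\rfloor, \lceil \ell/n_0\rceil\} \subseteq \{6,\dots,24\}$ are chosen with $\sum_v \ell_v = \ell$ (possible since $n_0\lfloor \ell/n_0\rfloor \leq \ell \leq n_0\lceil \ell/n_0\rceil$), and distribute the at most $8$ edges of $Y$ at $v$ among the vertices of $P_v$: each of the two endpoints of $P_v$ has internal degree $1$ and can take up to $2$ of them, each of the $\ell_v-2 \geq 4$ internal vertices has internal degree $2$ and can take $1$, for total capacity $\ell_v + 2 \geq 8$. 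The resulting graph $H_\ell$ then has exactly $\ell$ vertices and maximum degree at most $3$, and is computable in time polynomial in $\ell$ since $Y$ is given by explicit modular arithmetic and the blow-up is local. If a simple graph is wanted, the only parallel edges come from $Y$, and each bundle of parallel edges between two clouds can be re-routed to distinct vertices of the two clouds without exceeding the above capacity.

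What remains --- and the one step I expect to need real care --- is to show that $H_\ell$ is a $\delta$-expander for an absolute $\delta > 0$. This is the standard fact that blowing up each vertex of a bounded-degree $c$-expander into a connected cloud of bounded size (here at most $24$) and redistributing the incident edges preserves edge expansion up to a constant factor, but verifying it needs a short case analysis. Given a cut $S$ in $H_\ell$ with $|S| \leq |V(H_\ell)|/2$: a cloud $P_v$ meeting both $S$ and its complement contributes at least one crossing edge, since a path split into two nonempty parts has a crossing edge; let $S' \subseteq V(Y)$ collect the $v$ whose cloud lies wholly in $S$. If the number of ``mixed'' clouds is at least a constant fraction of the number of clouds that $S$ intersects, their crossing edges already give $e_{H_\ell}(S,\overline{S}) = \Omega(|S|)$, using that clouds have bounded size so that $|S|$ is within a constant factor of the number of clouds $S$ intersects. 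Otherwise $S'$ carries a constant fraction of $S$, and I would use the edge expansion of $Y$ in the min-form $e_Y(S', V(Y) \setminus S') \geq c \min(|S'|, n_0 - |S'|)$ (the min-form covering the case where $|S'|$ is not much smaller than $n_0$, where one additionally observes that $V(H_\ell)\setminus S$, being large and spread over the clouds disjoint from $S'$, forces $n_0-|S'|=\Omega(n_0)$); most $Y$-edges out of $S'$ then reach clouds disjoint from $S$, hence become crossing $H_\ell$-edges, and again $e_{H_\ell}(S,\overline{S}) = \Omega(|S|)$. Taking $\delta$ to be the minimum of the resulting constant and $2/\ell_0$ finishes the proof; all steps other than this last estimate are routine bookkeeping, and the main obstacle is precisely isolating and discharging the case analysis in this expansion bound.
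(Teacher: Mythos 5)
Your proposal is correct, but there is nothing to compare it against: the paper deliberately states this theorem without proof, as an encapsulation of ``any polynomial-time construction of a constant-degree expander'' (it is labelled a standard claim and used as a black box in Lemma~\ref{lem:getH}). What you supply is a legitimate explicit instantiation: a Margulis--Gabber--Galil scaffold on $m^2$ vertices, a path blow-up that simultaneously reduces the degree to $3$ and hits the exact vertex count $\ell$, and the standard argument that vertex blow-ups with bounded cloud size preserve edge expansion up to a constant. Your arithmetic checks out ($\ell/n_0\in[6,24)$, capacity $\ell_v+2\ge 8$ per cloud, the simple-graph rerouting of parallel bundles), and the case analysis you flag as the delicate step is sound and completable as sketched. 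The one point I would ask you to make explicit when writing it out: in the second case, the reason ``most $Y$-edges out of $S'$ reach clouds disjoint from $S$'' is that every such edge lands in a cloud that is either mixed or disjoint from $S$, and each mixed cloud can absorb at most $8$ of them (its $Y$-degree); since the case hypothesis makes the number of mixed clouds at most $\eta$ times the number of intersected clouds, choosing $\eta$ small relative to the scaffold's expansion constant $c$ leaves $\Omega(c\,|S'|)$ edges going to clouds entirely outside $S$, each of which is a genuine crossing edge of $H_\ell$. With that spelled out, the proof is complete and, for the purposes of this paper, interchangeable with any other citation-level expander construction.
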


\paragraph{Obtaining a multicommodity flow.}
We now recall some tools from~\cite{GroheM09,Marx2010}. 
Fix a graph $G$ and a set $W \subseteq V(G)$.
A pair $(A,B)$ is a \emph{separation} in $G$ if $A \cup B = V(G)$ and there is no edge
between $A \setminus B$ and $B \setminus A$. The \emph{sparsity} of $(A,B)$ (w.r.t.~$W$) 
  is defined as 
\[ \alpha^W(A,B) = \frac{|A \cap B|}{|A \cap W| \cdot |B \cap W|}. \]
Let $\alpha^W(G)$ be the minimum sparsity of a separation of $G$. 
We have the following observation.
\begin{lemma}\label{lem:alpha-delta}
If $G$ is a $\delta$-expander of maximum degree $3$, then 
\[\alpha^{V(G)}(G) \geq \frac{\delta}{3+\delta} |V(G)|^{-1}.\] 
\end{lemma}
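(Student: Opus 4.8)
The plan is to take an arbitrary separation $(A,B)$ of $G$ with $A,B$ both nonempty (separations with one side empty have $|A\cap W|\cdot|B\cap W| = 0$ and hence undefined sparsity, so they do not enter the minimum; recall $W = V(G)$) and decompose it via $S := A\setminus B$, $T := B\setminus A$, and the separator $C := A\cap B$. Writing $n := |V(G)|$, we have $|A| = |S|+|C|$, $|B| = |T|+|C|$, $|S|+|T|+|C| = n$, and since $W = V(G)$ the sparsity of $(A,B)$ is exactly $|C|/(|A|\cdot|B|)$. By symmetry (swapping $A$ and $B$ leaves the sparsity unchanged) I may assume $|S| \le |T|$, and hence $|S| \le n/2$.

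The core of the argument is a two-sided estimate on the number $d$ of edges of $G$ with exactly one endpoint in $S$. On one side, because $(A,B)$ is a separation there is no edge between $S = A\setminus B$ and $T = B\setminus A$, so every such edge has its other endpoint in $C$; as $G$ has maximum degree $3$, this gives $d \le 3|C|$. On the other side, if $S \ne \emptyset$ then (using $|S|\le n/2$) the $\delta$-expander hypothesis gives $d \ge \delta|S|$. Combining, $\delta|S| \le 3|C|$, so $|S| \le (3/\delta)|C|$ (in particular $|C|\ge 1$), and therefore $|A| = |S|+|C| \le \tfrac{3+\delta}{\delta}|C|$. Since also $|B| \le n$, I get
\[
	\frac{|C|}{|A|\cdot|B|} \;\ge\; \frac{|C|}{\frac{3+\delta}{\delta}|C|\cdot n} \;=\; \frac{\delta}{(3+\delta)\,n},
\]
which is the claimed bound on $\alpha^{V(G)}(G)$.

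It remains to dispose of the degenerate case $S = \emptyset$ (the case $|S|\le n/2$ that was not covered above). Then $A = A\cap B = C$, so the sparsity of $(A,B)$ equals $|C|/(|C|\cdot|B|) = 1/|B| \ge 1/n \ge \delta/((3+\delta)n)$, the last inequality being simply $\delta < 3+\delta$. Taking the minimum over all separations yields the statement. I do not expect a genuine obstacle here: the only points requiring a little care are (i) restricting attention to the side of size at most $n/2$ so that the expansion hypothesis applies, and (ii) separating out the $S=\emptyset$ corner case; the rest is a one-line degree count.
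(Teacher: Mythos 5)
Your proof is correct and follows essentially the same route as the paper's: apply the expansion property to the smaller of $A\setminus B$, $B\setminus A$, use the degree bound $3$ to get $\delta|A\setminus B|\le 3|A\cap B|$, and conclude $|A|\le\frac{3+\delta}{\delta}|A\cap B|$. The only differences are cosmetic bookkeeping (your explicit $S=\emptyset$ case is harmless but unnecessary, since $|A|\le\frac{3+\delta}{\delta}|A\cap B|$ holds trivially there).
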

\begin{proof}
Let $(A,B)$ be a separation of $G$. Without loss of generality assume that $|A| \leq |B|$,
    so $|A \setminus B| \leq |V(G)|/2$
Then, by the expander property for the set $A \setminus B$ we have
\[ \delta|A \setminus B| \leq 3|A \cap B|. \]
Hence
\[ |A| \leq \frac{3+\delta}{\delta} |A \cap B|. \]
Therefore
\[ \alpha^{V(G)}(A,B) = \frac{|A \cap B|}{|A| \cdot |B|} \geq \frac{\delta}{3+\delta} \cdot |V(G)|^{-1}. \]
\end{proof}

For the definition of a concurrent flow and sampling from it, we follow the notation
of~\cite{HKPS20}.

A \emph{flow} $f_{u, v}$ between two vertices $u$ and $v$ is a weighted collection of pairwise distinct paths between $u$ and $v$ (if $u = v$ then the only possible path in $f_{u, v}$ has length 0).
The \emph{units of flow sent through a vertex $w$} by $f_{u, v}$ is the sum of the weights of the paths in $f_{u, v}$ that contain~$w$.
The \emph{value} of $f_{u, v}$ is defined as the units of flow sent through $u$ (or equivalently through $v$).
A \emph{concurrent flow of value $\nu$
	and congestion $\gamma$} is a collection of $|V(G)|^2$ flows $(f_{u,v})_{(u,v) \in V(G) \times V(G)}$
such that
\begin{itemize}
	\item $f_{u,v}$ sends exactly $\nu$ units of flow from $u$ to $v$; and
	\item for each vertex $w$ the total flow over all flows $f_{u, v}$ sent through $w$ is at most $\gamma$.
\end{itemize}

We need the following relation between the minimum possible congestion and 
sparsity of separations.
\begin{theorem}[\cite{FeigeHL08,LeightonR99}]\label{thm:fhl}
Let $\gamma$ be the minimum possible congestion of concurrent flow of value $1$ in a graph $G$.
Then there exists a separation with sparsity (w.r.t.~$|V(G)|$) $\Oh(\log |V(G)| / \gamma)$. 
\end{theorem}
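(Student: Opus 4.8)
The plan is to prove \cref{thm:fhl}: if $\gamma$ is the minimum possible congestion of a concurrent flow of value $1$ in $G$, then there is a separation $(A,B)$ of $G$ with sparsity (w.r.t.\ $V(G)$) at most $\Oh(\log |V(G)| / \gamma)$. This is the classical flow--cut gap for uniform (product) multicommodity flow due to Leighton--Rao, refined via the Feige--Hajiaghayi--Lee machinery; rather than reproving the whole theorem from scratch I would cite it in the form we actually use, and only outline why the statement takes the shape above. The key dictionary to set up first is: the ``concurrent flow of value $1$'' here routes $|V(G)|^2$ commodities (one unit between every ordered pair), so a scaling where the \emph{maximum concurrent flow value} is $f^*$ corresponds to a congestion of $1/f^*$; and the separation notion we want is exactly the sparsest cut with \emph{uniform demands}, whose optimum I will call $\alpha^*$. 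With this translation, the theorem is the assertion $\alpha^* = \Oh(\log |V(G)|) \cdot f^* = \Oh(\log|V(G)| / \gamma)$.

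First I would state the LP duality: the maximum concurrent flow value $f^*$ equals, by LP duality, the minimum over nonnegative edge lengths $\ell$ normalized so that $\sum_e \ell(e) \cdot (\text{capacity}) = 1$ of the sum $\sum_{u,v} d_\ell(u,v)$ of shortest-path distances over all demand pairs; this is the standard concurrent-flow/fractional-cut dual. Since here all capacities are $1$ and all $|V(G)|^2$ demands are $1$, the dual optimum is a shortest-path \emph{metric} $d_\ell$ on $V(G)$ minimizing $\sum_{u,v} d_\ell(u,v)$ subject to $\sum_{e} \ell(e) = 1$ and $d_\ell$ being the shortest-path metric of $\ell$. The point of the Leighton--Rao theorem is that any such metric can be rounded to a cut metric (a convex combination of cut metrics, i.e.\ an $\ell_1$ metric) losing only an $\Oh(\log|V(G)|)$ factor in the ratio $\big(\sum_e \ell(e)\big) / \big(\sum_{u,v} d_\ell(u,v)\big)$; an $\ell_1$ metric is a nonnegative combination of cuts, and averaging produces a single cut $(S, V(G)\setminus S)$ with $|\partial S| / (|S|\,|V(G)\setminus S|) \leq \Oh(\log |V(G)|) \cdot f^*$. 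Taking $A = N[S]$ or simply $A = S \cup \partial S$, $B = (V(G)\setminus S)\cup \partial S$ turns the edge cut into a separation with $|A\cap B| \leq |\partial S|$ and $|A\cap W|\cdot|B\cap W| \geq |S|\cdot|V(G)\setminus S|$ up to constants (using $W = V(G)$), giving the claimed bound on $\alpha^W(A,B)$.

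The main obstacle --- and the only genuinely nontrivial content --- is the metric-to-$\ell_1$ rounding step: the statement that an arbitrary finite metric on $N$ points embeds into $\ell_1$ with distortion $\Oh(\log N)$ (Bourgain's theorem), or equivalently the region-growing / ball-growing argument of Leighton--Rao that directly extracts a sparse cut from the optimal fractional solution. I would handle this by citing \cite{LeightonR99} (and \cite{FeigeHL08} for the refinement that lets one phrase it as a clean separation statement rather than an edge cut), since the excerpt already lists both references for exactly this theorem; reproducing Bourgain's embedding or the region-growing analysis in full is standard but long, and is not where the novelty of this paper lies. What remains to check carefully, and where I would spend the writeup, is the bookkeeping converting between the three formulations --- congestion $\gamma$ of a value-$1$ concurrent flow, maximum concurrent flow value $f^* = 1/\gamma$, and sparsest uniform cut --- and the final cosmetic step of inflating an edge cut into a separation $(A,B)$ so that the sparsity is measured exactly as $|A\cap B|/(|A\cap W| \cdot |B\cap W|)$ with $W = V(G)$, absorbing the inflation into the hidden constant.
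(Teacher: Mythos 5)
The paper does not prove this statement at all: it is imported verbatim as a black box from \cite{FeigeHL08,LeightonR99} (and used only via \cref{lem:getH}), so your plan to defer the heavy lifting to those references is in principle consistent with what the paper does. The problem is that the dictionary you set up is for the wrong theorem. In this paper's definitions, congestion is measured \emph{at vertices} (``the total flow over all flows $f_{u,v}$ sent through $w$ is at most $\gamma$'') and sparsity is that of a \emph{vertex separation}, $\alpha^W(A,B)=|A\cap B|/(|A\cap W|\cdot|B\cap W|)$. Your outline translates to the edge-capacitated uniform multicommodity flow of Leighton--Rao: edge capacities, edge lengths in the dual, a sparse edge cut, and then a ``cosmetic'' inflation of the edge cut into a separation. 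The identification $f^*=1/\gamma$ is exactly where this breaks: a value-$1$ flow of vertex congestion $\gamma$ has edge congestion $\gamma_{\mathrm{edge}}\le\gamma$, so the Leighton--Rao edge bound only yields a cut (and hence, after taking endpoints, a separation) of sparsity $\Oh(\log|V(G)|/\gamma_{\mathrm{edge}})$, which in general graphs can exceed the claimed $\Oh(\log|V(G)|/\gamma)$ by an unbounded (degree) factor. A star $K_{1,n-1}$ makes this concrete: $\gamma=\Theta(n^2)$ (all flow crosses the center) while $\gamma_{\mathrm{edge}}=\Theta(n)$, and the true vertex sparsity is $\Theta(1/n^2)$; your route only certifies $\Oh(\log n/n)$-type sparsity, missing the statement by a factor of about $n$.

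So the Feige--Hajiaghayi--Lee ingredient is not a cosmetic edge-cut-to-separation conversion to be ``absorbed into the hidden constant''; it is the substantive content here, namely the flow--cut gap for \emph{vertex} separators: LP duality between the minimum-vertex-congestion concurrent flow and a fractional vertex-separator relaxation, followed by an $\Oh(\log n)$ rounding of the dual (vertex-length) metric into an actual separation $(A,B)$. That is the version you need to cite or sketch (it is also how Marx uses the statement). Your endpoint-taking step $|A\cap B|\le|\partial S|$ is fine as far as it goes, and in the paper's application the graph $H$ has maximum degree $3$, where vertex and edge congestion agree up to constants, but the theorem as stated is for arbitrary $G$, and as written your argument does not prove it.
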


From the above toolbox we obtain the following corollary.
\begin{lemma}\label{lem:getH}
There exists a constaint $c$ and an algorithm that, given an integer $\ell \geq 2$, in time polynomial in $\ell$
computes a graph $H$ with $\ell$ vertices and maximum degree at most $3$
and a concurrent flow $(f_{u,v})_{u,v \in V(H)}$ of congestion at most $c \ell \log \ell$. 
\end{lemma}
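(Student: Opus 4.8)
The plan is to combine the expander construction of \cref{thm:expanders} with the flow-sparsity duality of \cref{thm:fhl}, using \cref{lem:alpha-delta} as the bridge. First I would invoke \cref{thm:expanders} with $\ell$ as the requested number of vertices: this gives in polynomial time a simple $\delta$-expander $H = H_\ell$ on $\ell$ vertices with maximum degree at most $3$, where $\delta > 0$ is the absolute constant from that theorem. This $H$ will be the graph we output, so the only remaining task is to exhibit a concurrent flow on $H$ of value $1$ with congestion $\Oh(\ell \log \ell)$.

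To get the flow, I would argue by contraposition through \cref{thm:fhl}. Let $\gamma^\star$ be the \emph{minimum} possible congestion of a concurrent flow of value $1$ in $H$. By \cref{thm:fhl}, there is a separation $(A,B)$ of $H$ with sparsity (w.r.t.\ $V(H)$) at most $c' \log \ell / \gamma^\star$ for some absolute constant $c'$; that is, $\alpha^{V(H)}(H) \leq c' \log \ell / \gamma^\star$. On the other hand, \cref{lem:alpha-delta} applied to $H$ (which is a $\delta$-expander of maximum degree $3$) gives the lower bound $\alpha^{V(H)}(H) \geq \frac{\delta}{3+\delta}\,\ell^{-1}$. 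Chaining the two inequalities yields
\[
	\frac{\delta}{3+\delta}\cdot\frac{1}{\ell} \;\leq\; \alpha^{V(H)}(H) \;\leq\; \frac{c'\log \ell}{\gamma^\star},
\]
and rearranging gives $\gamma^\star \leq \frac{(3+\delta)c'}{\delta}\,\ell\log\ell =: c\,\ell\log\ell$, with $c = (3+\delta)c'/\delta$ an absolute constant. Thus a concurrent flow of value $1$ and congestion at most $c\ell\log\ell$ exists in $H$.

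The last wrinkle is that \cref{thm:fhl} only asserts \emph{existence} of such a near-optimal-congestion flow, whereas \cref{lem:getH} demands that the algorithm \emph{compute} one in time polynomial in $\ell$. I would handle this by noting that the minimum-congestion concurrent flow of value $1$ is the optimum of a polynomial-size linear program over $H$ (one can either work with path variables and the standard separation-oracle argument, or, more simply, use the equivalent edge-flow LP with a variable for each ordered pair and each edge together with flow-conservation constraints, which has size polynomial in $\ell$ since $|E(H)| \leq 3\ell/2$). Solving this LP in polynomial time produces a flow whose congestion is at most the optimum $\gamma^\star \leq c\ell\log\ell$, and then decomposing each commodity's edge-flow into at most a polynomial number of weighted paths (standard flow decomposition) gives the concurrent flow in the path form required by the definition. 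The only point that needs a line of care is that $\ell \geq 2$ guarantees $\log \ell > 0$ so the bounds are meaningful, and that the constants $\delta, c'$ are the fixed ones coming from the cited theorems; I expect the main (very mild) obstacle to be spelling out the polynomial-time computation of the near-optimal flow rather than anything in the combinatorial estimate, which is essentially immediate from the three cited results.
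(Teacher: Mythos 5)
Your proposal is correct and follows essentially the same route as the paper: obtain the $\delta$-expander from Theorem~\ref{thm:expanders}, combine Lemma~\ref{lem:alpha-delta} with Theorem~\ref{thm:fhl} (in contrapositive form) to bound the minimum congestion by $\Oh(\ell\log\ell)$, and compute the flow by solving the corresponding linear program in polynomial time. Your write-up merely spells out the chaining of inequalities and the LP/flow-decomposition details that the paper leaves implicit.
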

\begin{proof}
Apply Theorem~\ref{thm:expanders} to $\ell$, obtaining a $\delta$-expander $H := H_\ell$.
As $\alpha^{V(H)}(H) \geq \frac{\delta}{3+\delta} \ell^{-1}$ by Lemma~\ref{lem:alpha-delta},
there exists in $H$ a concurrent flow of value $1$ and congestion at most $c \dot \ell \log \ell$
for some universal constant $c$ (that depends on $\delta$ and the constants hidden in Theorem~\ref{thm:fhl}). 
Finally, we observe that the problem of finding a concurrent flow of value $1$ and minimum possible congestion can be formulated as a linear program, and therefore solved in polynomial time. 
\end{proof}
 
\subsection{Constructing the embedding}

We set $\ell := \lfloor k/4 \rfloor$; note that $\ell \geq 2$ due to the assumption $k \geq 8$. 
We apply Lemma~\ref{lem:getH} to $\ell$ obtaining a graph $H$ on $\ell$ vertices and maximum degree at most $3$, and a concurrent flow $(f_{u,v})_{u,v \in V(H)}$ of value $1$ and congestion
at most $c \ell \log \ell \leq ck \log k$. Without loss of generality, we assume $c \geq 1$.
Note that $|V(H)|+|E(H)| \leq \ell+3\ell \leq k$ as desired.
We assume $V(H) = [\ell]$ for ease of the notation.

Similarly as in~\cite{GroheM09,HKPS20,Marx2010}, we treat $f_{u,v}$ as a probability distribution over paths from $u$ to $v$:
the probability of choosing a path equals the amount of flow passed along this path (its weight in $f_{u,v}$). 

The crucial tool for the construction is the following analysis.
\begin{claim}\label{cl:make-paths}
Fix integer $p \geq 1$ and consider the following random process.
For every $x \in V(H)$ and every $i \in [p]$, randomly pick (with uniform distribution) a vertex $Y(x,i) \in V(H)$ and then a path $P(x,i)$
from $x$ to $Y(x,i)$ according to the distribution $f_{x,Y(x,i)}$. 
Then, with probability more than $0.9$, the family of paths $\mathcal{P} = \{P(x,i)~|~x \in V(H), i \in [p]\}$ has congestion at most $10cp \log \ell$ (where $c$ is the constant from Lemma~\ref{lem:getH}).
\end{claim}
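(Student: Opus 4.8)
The plan is a routine first-moment computation combined with the Chernoff bound of Theorem~\ref{thm:Chernoff} and a union bound over the few vertices of $H$.

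First I would fix a vertex $w \in V(H)$ and let $X_w$ be the number of paths in $\mathcal{P}$ that pass through $w$, so that $X_w = \sum_{x \in V(H)} \sum_{i \in [p]} \mathbf{1}[w \in P(x,i)]$. Since the pairs $(Y(x,i), P(x,i))$ are drawn independently over all $\ell p$ indices $(x,i)$, this is a sum of independent $\{0,1\}$-valued random variables, so Theorem~\ref{thm:Chernoff} applies. To bound $\mu_w := \mathbb{E}[X_w]$, note that, conditioned on $Y(x,i) = y$, the probability that the path $P(x,i)$ (drawn from the distribution $f_{x,y}$, which is a genuine probability distribution because the concurrent flow has value $1$) contains $w$ equals exactly the units of flow sent through $w$ by $f_{x,y}$. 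Averaging over the uniform choice of $Y(x,i)$ and summing over the $\ell p$ indices gives
\[ \mu_w = \frac{p}{\ell} \sum_{x, y \in V(H)} \bigl(\text{units of flow through } w \text{ by } f_{x,y}\bigr) \leq \frac{p}{\ell} \cdot c\ell \log \ell = cp\log\ell, \]
where the inequality is precisely the congestion bound of the concurrent flow from Lemma~\ref{lem:getH}.

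Next I would apply Theorem~\ref{thm:Chernoff} with $\mu = \mu_w$ and $r = 9cp\log\ell \geq 9\mu_w \geq \mu_w$, obtaining
\[ \mathrm{Pr}\bigl[X_w > 10cp\log\ell\bigr] \leq \mathrm{Pr}\bigl[X_w > \mu_w + r\bigr] \leq \left(\frac{e\mu_w}{r}\right)^{r} \leq \left(\frac{e}{9}\right)^{9cp\log\ell} \leq 2^{-9cp\log\ell}, \]
using $e/9 < 1/2$. Taking a union bound over the $\ell = |V(H)|$ choices of $w$, the probability that $\mathcal{P}$ has congestion exceeding $10cp\log\ell$ is at most $\ell \cdot 2^{-9cp\log\ell}$, which is strictly below $0.1$ for all $\ell \geq 2$ (and $c,p \geq 1$), since $9cp\log\ell$ comfortably dominates $\log(10\ell)$. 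This proves the claim.

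I do not expect a genuine obstacle here; the two points that need care are (i) making explicit that the indicators $\mathbf{1}[w \in P(x,i)]$ are mutually independent, so that Theorem~\ref{thm:Chernoff} is applicable, and (ii) correctly matching the expected load at $w$ with the per-vertex congestion $c\ell\log\ell$ of the concurrent flow, so that only a single $\log\ell$ factor of overhead propagates into the final bound. A minor subtlety worth a sentence is that $w$ may coincide with an endpoint of some path $P(x,i)$, but this case is already subsumed in the definition of ``units of flow through a vertex'' and hence in the congestion bound.
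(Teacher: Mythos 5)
Your proof is correct and follows essentially the same route as the paper's: the same first-moment bound $\mu_w \leq cp\log\ell$ via the congestion of the concurrent flow, the same application of Theorem~\ref{thm:Chernoff} with $r = 9cp\log\ell$, and the same union bound over the $\ell$ vertices of $H$. The only differences are cosmetic choices of constants in the final numeric estimate.
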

\begin{proof}
Fix a vertex $w \in V(H)$. 
For $x,y \in V(H)$, let $f_{x,y}(w)$ be the amount of flow $f_{x,y}$ that passes through $w$.
For $x \in V(H)$ and $i \in [p]$, let $X(x,i)$ be a $\{0,1\}$-valued random variable indicating if $w$ lies on the path $P(x,i)$.
Then,
\[ \mathbb{E} X(x,i) = \mathrm{Pr} \left(X(x,i) = 1\right) = \ell^{-1} \sum_{y \in V(H)} \mathrm{Pr} \left(w \in P(x,i)~|~Y(x,i) = y \right) = \ell^{-1} \sum_{y \in V(H)} f_{x,y}(w). \]
Hence, if we define $X = \sum_{x \in V(H)} \sum_{i=1}^p X(x,i)$, we have 
\[ \mathbb{E} X = \sum_{x \in V(H)} \sum_{i=1}^p \ell^{-1} \sum_{y \in V(H)} f_{x,y}(w) = p\ell^{-1} \sum_{x,y \in V(H)} f_{x,y}(w) \leq p\ell^{-1} c\ell\log\ell = cp\log \ell. \]
By the aforementioned version of the Chernoff bound (Theorem~\ref{thm:Chernoff}), we obtain:
\[ \mathrm{Pr}\left(X > 10cp \log \ell \right) \leq \left(\frac{e \mathbb{E} X}{9cp \log \ell}\right)^{9cp \log \ell} \leq e^{-9cp \log \ell} \leq \ell^{-9} < \ell^{-1}/10.\]
In the last two inequalities we used $c \geq 1$ and $\ell \geq 2$. 
Thus, by the union bound, the probability that for every $w \in V(H)$ at most $10cp\log \ell$ of $\mathcal{P}$ pass through $w$ is at least $0.9$.
\end{proof}

We are now ready to construct the final embedding $\phi$ of $G$ into $H$. 
We start with spliting $V(G)$ into $\ell$ buckets of as equal as possible sizes: that is, 
we take arbitrary $\zeta : V(H) \to [\ell]$ such that for every $a \in [\ell]$, 
we have $|\zeta^{-1}(i)| \in \{ \lfloor |V(G)|/\ell \rfloor, \lceil |V(G)|/\ell \rceil \}$. 
We start by setting $\phi(v) = \{\zeta(v)\}$ for every $v \in V(G)$.
Then, for every $xy \in E(G)$ with $\zeta(x) \neq \zeta(y)$ we proceed as follows:
\begin{enumerate}
\item Uniformly at random pick $Z(xy) \in V(H)$.
\item Sample a path $P(xy,x)$ from $x$ to $Z(xy)$ according to the distribution $f_{x,Z(xy)}$ and add $P(xy,x)$ to $\phi(x)$.
\item Sample a path $P(xy,y)$ from $y$ to $Z(xy)$ according to the distribution $f_{y,Z(xy)}$ and add $P(xy,y)$ to $\phi(y)$.
\end{enumerate}
Clearly, the constructed $\phi$ is indeed an embedding of $G$ to $H$. It remains to bound its congestion. 

To this end, consider a vertex $w \in V(H)$ and a vertex $v \in V(G)$ such that $w \in \phi(v)$.
We say that $w$ is a \emph{type-0} member of $\phi(v)$ if $\zeta(v) = w$.
We say that $w$ is a \emph{type-1} member of $\phi(v)$ if for some $xy \in E(G)$, with $\zeta(x) < \zeta(y)$, 
$w$ lies on the sampled path $P(xy,x)$. 
We say that $w$ is a \emph{type-2} member of $\phi(v)$ if for some $xy \in E(G)$, with $\zeta(x) < \zeta(y)$, 
$w$ lies on the sampled path $P(xy,y)$. 
Note that $w$ is a member of at least one type (but possibly of multiple types). 

A fixed vertex $w$ is a type-0 member of exactly $|\zeta^{-1}(w)| \leq 1+\ell^{-1}n$ sets $\phi(v)$.
The crucial observation is that the number sets $\phi(v)$ which $w$ is a type-1 member can be bounded using Claim~\ref{cl:make-paths} with $p := 3(1+\ell^{-1}n)$. 
Indeed, for a fixed vertex $a \in V(H)$, let $E_{a,1}$ be the set of edges $xy \in E(G)$ with $a = \zeta(x) < \zeta(y)$.
Since $G$ is of maximum degree at most $3$, we have $|E_{a,1}| \leq 3|\zeta^{-1}(a)| \leq 3(1+\ell^{-1}n) = p$.
Crucially, the sets $(E_{a,1})_{a \in V(H)}$ are pairwise disjoint. Thus, the choices of $Z(xy)$ for all $a \in V(H)$ and $xy \in E_{a,1}$ are independent.
Hence, by Claim~\ref{cl:make-paths}, with probability at least 0.9 every $w \in V(H)$ is a type-1 member of at most $10cp\log \ell$ sets $\phi(v)$.

A symmetrical argument shows that with probability at least 0.9 every $w \in V(H)$ is a type-2 member of at most $10cp \log \ell$ sets $\phi(v)$. 
Consequently, by the union bound, with probability at least 0.8 the congestion of the constructed embedding is at most
\[ (1+\ell^{-1}n) + 2 \cdot 10c (1+\ell^{-1}n) \log \ell \leq (20c+1)(1+\ell^{-1}n)\log \ell \leq (120c+6)(1+k^{-1}n)\log k. \]
In the last inequality we used $\ell = \lfloor k/4 \rfloor$ and $k \geq 8$. 
This finishes the proof of Theorem~\ref{thm:embedding}.

\newcommand\ThreeSat{\textsc{$3$-Cnf-Sat}\xspace}
\newcommand\satinst{\psi}
\newcommand\fail{\ensuremath{\mathsf{fail}}\xspace}
\newcommand\true{\ensuremath{\mathsf{true}}\xspace}
\newcommand\false{\ensuremath{\mathsf{false}}\xspace}
\newcommand\calC{\mathcal{C}}
\newcommand\calD{\mathcal{D}}
\newcommand\calR{\mathcal{R}}
\newcommand\congestion{\mathsf{con}}
\newcommand\cmcinst{\mathcal{I}}
\section{Wrapping up the proof of quasi-polynomial time lower bound}\label{sec:wrap}
In this section we finish the proof of the quasi-polynomial time lower bound for \CMC under the \ETH.
The lower bound follows from an application of \cref{thm:main:reduction} to an instance of \PSI 
constructed from a \ThreeSat instance 
with the use of \cref{thm:embedding} with appropriate choices of parameters along the way.
\thmhard*
\begin{proof}
	We give a randomized reduction from \ThreeSat to \DualCMC that succeeds with probability at least $0.5$.
	
	Let $\satinst$ be the \ThreeSat instance on $N$ variables and $M$ clauses,
	and let $G$ be the incidence graph of $\satinst$.
	We call the vertices in $G$ corresponding to variables in $\satinst$ the \emph{variable vertices},
	and the vertices corresponding to clauses \emph{clause vertices}.
	Let $G$ be the incidence graph of $\satinst$ and let $k = \sqrt{N + M}$.
	We apply \cref{thm:embedding} on $(G, k)$.
	Let $C$ be its constant, let $H$ be the graph, and $\phi$ the embedding of $G$ in $H$ returned by its algorithm.
	If the congestion of this embedding is more than $C(1+ k^{-1}(N+M))\cdot \log k$, then we report \fail. 
	Note that this happens with probability less than $0.5$.
	We may therefore assume that the congestion $\congestion(\phi)$ of the embedding $\phi$ is at most 
	\begin{align}
	    \label{eq:wrap:congestion}
		\congestion(\phi) \le C(1+ k^{-1}(N+M))\cdot \log k = \Oh(\sqrt{N+M}\log\sqrt{N+M}),
	\end{align}
	and that 
	\begin{align}
	    \label{eq:wrap:H}
		\card{V(H)} + \card{E(H)} \le k = \sqrt{N + M}.
	\end{align} 
	
	We now give a CSP instance $\satinst_H$ 
	that will be transformed to a \PSI instance to which we can apply \cref{thm:main:reduction}.
	To obtain $\satinst_H$,
	first think of $\satinst$ as a CSP instance $\satinst_G$ on the graph $G$, in the following way:
	The variables of $\satinst_G$ are the vertices of $G$.
	The domain of each variable corresponding to a variable vertex is $\{\true,\false\}$,
	and the domain of each variable corresponding to a clause vertex it is $[\ell]$,
	where $\ell \le 3$ is the number of literals in that clause.
	Each truth assignment to the variables of $\satinst$ naturally corresponds to a valuation 
	of the variable vertex variables in $\satinst_G$.
	If such an assingment is a satifying assignment, then each clause vertex variable can receive 
	a value that indicates which literal satisfies it.
	To ensure that this only works when a truth assignment is indeed satisfying, we add the following constraints to $\satinst_G$:
	Let $vz$ be an edge in $G$, where $v$ is a variable vertex and $z$ is a clause vertex, corresponding to a clause of size $\ell$.
	Suppose $v$ is the $i$-th literal in $z$.
	Let $R = \{\true, \false\} \times [\ell] \setminus \{i\}$.
	If $v$ occurs positively in $z$, then we add a constraint $((v, z), R \cup \{(\true, i)\})$ to $\satinst_G$,
	and if $v$ occurs negated, we add $((v, z), R \cup \{(\false, i)\})$ to $\satinst_G$.
	In either case, we denote the corresponding constraint by $((v, z), R_{vz})$.
	Such constraints ensure that if $z$ receives value $i$, then in any satisfying valuation,
	the truth assignment of $v$ satisfies $z$.
	
	We now obtain $\satinst_H$, a CSP instance on $H$, from $\satinst_G$ by ``routing $\phi$ through $\satinst_G$''.
	Concretely, this means the following.
	For each $v \in V(G)$, let $D_v$ denote the domain of $v$ in $\satinst_G$.
	The domain of each $w \in V(H)$ in $\satinst_H$ is $\calD_w = \bigtimes_{v \in V(G), w \in V(\phi(v))} D_v$.
	This way, assigning a value to $w$ in $\satinst_H$ corresponds to assigning values 
	to all $v \in V(G)$ that are mapped to $w$ via $\phi$.
	We add two types of constraints. 
	The first one simply checks that if a vertex $v \in V(G)$ is mapped to several vertices in $H$, 
	then the valuation stays consistent. 
	The second one checks the edge constraints from $\satinst_G$ 
	at a point where the subgraphs of their endpoints in $H$ touch.
	We use the following notation. 
	For $w \in V(H)$, $(v_1, \ldots, v_r) \in V(G)^r$ with $w \in \bigcap_{j \in [r]} V(\phi(v_j))$, 
		and $(i_1, \ldots, i_r) \in D_{v_1} \times \ldots \times D_{v_r}$,
	we let $\calD_w[v_1 \gets i_1, \ldots, v_r \gets i_r]$ 
	denote the subset of $\calD_w$, where for all $j \in [r]$, value of $v_j$ is fixed to $i_j$.
	In the following, domains of variables in $\satinst_H$ may get shrunk.
	For ease of notation, however, we always denote the current domain of $w \in V(H)$ by $\calD_w$.
	We do the following:
	\begin{enumerate}
		\item (\textit{Consistency.}) For each $v \in V(G)$ and each edge $w_1 w_2 \in E(H)$ such that $\{w_1, w_2\} \subseteq V(\phi(v))$,
			we add a constraint $((w_1, w_2), \bigcup_{i \in D_v} \calD_{w_1}[v \gets i] \times \calD_{w_2}[v \gets i])$. (Note that since $\phi(v)$ is connected, this indeed ensures consistency.)
		\item (\textit{Touching.}) For each $uv \in E(G)$, by definition, $\phi(u)$ and $\phi(v)$ touch. 
			In this part, we ensure that $\satinst_H$ respects the constraints in $\satinst_G$ imposed by the edges in $G$.
			For each $uv \in E(G)$, we do the following.
		\begin{enumerate}
			\item (\textit{Vertex Touching.}) Let $w \in V(\phi(u)) \cap V(\phi(v))$; 
				we restrict $\calD_w$ to 
				$$\bigcup\nolimits_{(i, j) \in R_{uv}} \calD_w[u \gets i, v \gets j].$$
			\item (\textit{Edge Touching.}) Let $wz \in E(H)$ such that $w \in V(\phi(u))$ and $z \in V(\phi(v))$;
				we add a constraint $((w, z), \calR^{uv}_{wz})$ to $\satinst_H$, where
					$$\calR^{uv}_{wz} = \bigcup\nolimits_{(i,j) \in R_{uv}} \calD_w[u \gets i] \times \calD_z[v \gets j].$$
		\end{enumerate}
	\end{enumerate}
	
	It follows directly from this description that $\satinst$ is satisfiable if and only if $\satinst_H$ is.
	Moreover, the domain of each variable in $\satinst_H$ (vertex in $H$) is at most $3^{\congestion(\phi)}$.

	We now transform $\satinst_H$ into a \PSI instance $(H, K)$;
	$H$ serves as the pattern graph and $K$ is obtained as follows.
    For all $x \in V(H)$, we add a set of vertices $V_x = \calD_x$ to $K$, i.e.,
    $V_x$ is the domain of the variable $x$ in $\satinst_H$.
    This way, a choice of one vertex per $V_x$ naturally corresponds to 
    a valuation of the variables in $\satinst_H$.
    To ensure that such a choice gives an $H$-subgraph in $K$ precisely when 
    the corresponding valuation satisfies $\satinst_H$,
    we add the following edges to $H$:
    For each $wz \in E(H)$,
    and each pair $u \in V_w$, $v \in V_z$,
    we add the edge $uv$ to $K$
    if and only if for each constraint of the form $((w, z), \calR)$ in $\satinst_H$,
    $(u, v) \in \calR$, in other words, it is satisfied by assigning~$u$ to~$w$ and~$v$ to~$z$.
    Again it is clear from the description that the instance $(H, K)$ 
    is equivalent to $\satinst_H$, and therefore to $\satinst$.
    If $H$ is not connected, 
    then we add a new color class to $H$ containing a single universal vertex $x_H$,
	as well as a universal vertex $x_K$ to $K$ with $V_{x_H} = \{x_K\}$.
	
    We now apply the reduction of \cref{thm:main:reduction} to $(H, K)$ 
    and obtain an instance $\cmcinst$ of \DualCMC.
    Let $L = N + M$.
    To analyze the size of $\cmcinst$, 
    recall that by \eqref{eq:wrap:H}, $\card{V(H)} + \card{E(H)} \le \sqrt{L}$,
    and since each color class of $K$ has at most $3^{\congestion(\phi)} = 2^{\Oh(\sqrt{L} \log \sqrt{L})}$ vertices,
    we have that $\card{V(K)} = 2^{\Oh(\sqrt{L} \log \sqrt{L})}$ and $\card{E(K)} = 2^{\Oh(\sqrt{L} \log \sqrt{L})}$,
    by \eqref{eq:wrap:congestion}.
    Let $n$ be the number of vertices in $\cmcinst$ and let $p$ be the number of labels.
    By \cref{thm:main:reduction},
    \begin{align*}
        n = \Oh\left(\card{V(H)}(4\card{E(H)})^{\card{E(H)}}\card{V(K)}\right) = 2^{\Oh(\sqrt{L}\log \sqrt{L})}, \mbox{ and } 
        p = \card{E(K)} = 2^{\Oh(\sqrt{L}\log\sqrt{L})},
    \end{align*}
    and $\cmcinst$ can be constructed in $2^{\Oh(\sqrt{L}\log \sqrt{L})}$ time.
    To conclude, an algorithm for \DualCMC running in $(np)^{o(\log n/(\log\log n)^2)}$ would yield a \ThreeSat algorithm running in
    \newcommand\valn{2^{\sqrt{L}\log\sqrt{L}}}
    \newcommand\logvaln{\sqrt{L}\log\sqrt{L}}
    \newcommand\loglogvaln{\log\sqrt{L} + \log\log\sqrt{L}}
    \begin{align*}
        (np)^{o(\log n/(\log\log n)^2)} 
            = \left(\valn\right)^{o\left(\frac{\logvaln}{(\loglogvaln)^2}\right)}
            = 2^{o\left(L\frac{(\log\sqrt{L})^2}{(\loglogvaln)^2}\right)} %
            = 2^{o(N+M)}
    \end{align*}
    time, contradicting the \ETH.
\end{proof}

\bibliographystyle{plainurl}
\bibliography{references}

\begin{thebibliography}{10}

\bibitem{Blin201466}
Guillaume Blin, Paola Bonizzoni, Riccardo Dondi, Romeo Rizzi, and Florian
  Sikora.
\newblock Complexity insights of the minimum duplication problem.
\newblock {\em Theor. Comput. Sci.}, 530:66--79, 2014.
\newblock \href {https://doi.org/10.1016/j.tcs.2014.02.025}
  {\path{doi:10.1016/j.tcs.2014.02.025}}.

\bibitem{ItorProtti}
Augusto Bordini, Fábio Protti, Thiago~Gouveia da~Silva, and Gilberto~Farias
  de~Sousa~Filho.
\newblock New algorithms for the minimum coloring cut problem.
\newblock {\em International Transactions in Operational Research},
  26(5):1868--1883, 2019.
\newblock \href {https://doi.org/https://doi.org/10.1111/itor.12494}
  {\path{doi:https://doi.org/10.1111/itor.12494}}.

\bibitem{near-lin-flow}
Li~Chen, Rasmus Kyng, Yang~P. Liu, Richard Peng, Maximilian~Probst Gutenberg,
  and Sushant Sachdeva.
\newblock Maximum flow and minimum-cost flow in almost-linear time.
\newblock {\em CoRR}, abs/2203.00671, 2022.
\newblock \href {http://arxiv.org/abs/2203.00671} {\path{arXiv:2203.00671}},
  \href {https://doi.org/10.48550/arXiv.2203.00671}
  {\path{doi:10.48550/arXiv.2203.00671}}.

\bibitem{CoudertDPRV07}
David Coudert, Pallab Datta, Stephane Perennes, Herv{\'{e}} Rivano, and
  Marie{-}Emilie Voge.
\newblock Shared risk resource group complexity and approximability issues.
\newblock {\em Parallel Process. Lett.}, 17(2):169--184, 2007.
\newblock \href {https://doi.org/10.1142/S0129626407002958}
  {\path{doi:10.1142/S0129626407002958}}.

\bibitem{coudert2016combinatorial}
David Coudert, St{\'e}phane P{\'e}rennes, Herv{\'e} Rivano, and Marie-Emilie
  Voge.
\newblock Combinatorial optimization in networks with shared risk link groups.
\newblock {\em Discrete Mathematics and Theoretical Computer Science}, 18(3),
  2016.
\newblock \href {https://doi.org/10.46298/dmtcs.1297}
  {\path{doi:10.46298/dmtcs.1297}}.

\bibitem{FeigeHL08}
Uriel Feige, MohammadTaghi Hajiaghayi, and James~R. Lee.
\newblock Improved approximation algorithms for minimum weight vertex
  separators.
\newblock {\em {SIAM} J. Comput.}, 38(2):629--657, 2008.
\newblock \href {https://doi.org/10.1137/05064299X}
  {\path{doi:10.1137/05064299X}}.

\bibitem{fellows2010parameterized}
Michael~R Fellows, Jiong Guo, and Iyad Kanj.
\newblock The parameterized complexity of some minimum label problems.
\newblock {\em Journal of Computer and System Sciences}, 76(8):727--740, 2010.
\newblock \href {https://doi.org/10.1016/j.jcss.2010.02.012}
  {\path{doi:10.1016/j.jcss.2010.02.012}}.

\bibitem{ford_fulkerson_1956}
L.~R. Ford and D.~R. Fulkerson.
\newblock Maximal flow through a network.
\newblock {\em Canadian Journal of Mathematics}, 8:399--404, 1956.
\newblock \href {https://doi.org/10.4153/CJM-1956-045-5}
  {\path{doi:10.4153/CJM-1956-045-5}}.

\bibitem{ford1962dr}
L.~R. Ford and D.~R. Fulkerson.
\newblock Flows in networks.
\newblock {\em Princeton University Pres, Princeton, NJ}, 1962.

\bibitem{FOCSYu}
Yu~Gao, Yang~P. Liu, and Richard Peng.
\newblock Fully dynamic electrical flows: Sparse maxflow faster than
  goldberg-rao.
\newblock In {\em 2021 IEEE 62nd Annual Symposium on Foundations of Computer
  Science (FOCS)}, pages 516--527, 2022.
\newblock \href {https://doi.org/10.1109/FOCS52979.2021.00058}
  {\path{doi:10.1109/FOCS52979.2021.00058}}.

\bibitem{icalp2020}
Paweł Gawrychowski, Shay Mozes, and Oren Weimann.
\newblock {Minimum Cut in ${O}(m log^2 n)$ Time}.
\newblock In Artur Czumaj, Anuj Dawar, and Emanuela Merelli, editors, {\em 47th
  International Colloquium on Automata, Languages, and Programming (ICALP
  2020)}, volume 168 of {\em Leibniz International Proceedings in Informatics
  (LIPIcs)}, pages 57:1--57:15, Dagstuhl, Germany, 2020. Schloss
  Dagstuhl--Leibniz-Zentrum f{\"u}r Informatik.
\newblock \href {https://doi.org/10.4230/LIPIcs.ICALP.2020.57}
  {\path{doi:10.4230/LIPIcs.ICALP.2020.57}}.

\bibitem{GhaffariKP17}
Mohsen Ghaffari, David~R. Karger, and Debmalya Panigrahi.
\newblock Random contractions and sampling for hypergraph and hedge
  connectivity.
\newblock In Philip~N. Klein, editor, {\em Proceedings of the Twenty-Eighth
  Annual {ACM-SIAM} Symposium on Discrete Algorithms, {SODA} 2017, Barcelona,
  Spain, Hotel Porta Fira, January 16-19}, pages 1101--1114. {SIAM}, 2017.
\newblock \href {https://doi.org/10.1137/1.9781611974782.71}
  {\path{doi:10.1137/1.9781611974782.71}}.

\bibitem{GroheM09}
Martin Grohe and D{\'{a}}niel Marx.
\newblock On tree width, bramble size, and expansion.
\newblock {\em J. Comb. Theory, Ser. {B}}, 99(1):218--228, 2009.
\newblock \href {https://doi.org/10.1016/j.jctb.2008.06.004}
  {\path{doi:10.1016/j.jctb.2008.06.004}}.

\bibitem{HKPS20}
Meike Hatzel, Pawel Komosa, Marcin Pilipczuk, and Manuel Sorge.
\newblock Constant congestion brambles.
\newblock {\em CoRR}, abs/2008.02133, 2020.
\newblock URL: \url{https://arxiv.org/abs/2008.02133}, \href
  {http://arxiv.org/abs/2008.02133} {\path{arXiv:2008.02133}}.

\bibitem{eth}
Russell Impagliazzo, Ramamohan Paturi, and Francis Zane.
\newblock Which problems have strongly exponential complexity?
\newblock {\em J. Comput. Syst. Sci.}, 63(4):512--530, 2001.
\newblock \href {https://doi.org/10.1006/jcss.2001.1774}
  {\path{doi:10.1006/jcss.2001.1774}}.

\bibitem{KargerAlgorithm}
David~R. Karger.
\newblock Global min-cuts in rnc, and other ramifications of a simple min-cut
  algorithm.
\newblock In {\em Proceedings of the Fourth Annual ACM-SIAM Symposium on
  Discrete Algorithms}, SODA '93, page 21–30, USA, 1993. Society for
  Industrial and Applied Mathematics.
\newblock URL: \url{https://dl.acm.org/doi/10.5555/313559.313605}.

\bibitem{Karger2000}
David~R. Karger.
\newblock Minimum cuts in near-linear time.
\newblock {\em J. ACM}, 47(1):46--76, jan 2000.
\newblock \href {https://doi.org/10.1145/331605.331608}
  {\path{doi:10.1145/331605.331608}}.

\bibitem{SiCompTarun}
Tarun Kathuria, Yang~P. Liu, and Aaron Sidford.
\newblock Unit capacity maxflow in almost $m^{4/3}$ time.
\newblock {\em SIAM Journal on Computing}, 0(0):FOCS20--175--FOCS20--204, 2022.
\newblock \href {https://doi.org/10.1137/20M1383525}
  {\path{doi:10.1137/20M1383525}}.

\bibitem{LeightonR99}
Frank~Thomson Leighton and Satish Rao.
\newblock Multicommodity max-flow min-cut theorems and their use in designing
  approximation algorithms.
\newblock {\em J. {ACM}}, 46(6):787--832, 1999.
\newblock \href {https://doi.org/10.1145/331524.331526}
  {\path{doi:10.1145/331524.331526}}.

\bibitem{MadryFOCS}
Aleksander Madry.
\newblock Computing maximum flow with augmenting electrical flows.
\newblock In {\em 2016 IEEE 57th Annual Symposium on Foundations of Computer
  Science (FOCS)}, pages 593--602, 2016.
\newblock \href {https://doi.org/10.1109/FOCS.2016.70}
  {\path{doi:10.1109/FOCS.2016.70}}.

\bibitem{Marx2010}
D{\'{a}}niel Marx.
\newblock Can you beat treewidth?
\newblock {\em Theory Comput.}, 6(1):85--112, 2010.
\newblock \href {https://doi.org/10.4086/toc.2010.v006a005}
  {\path{doi:10.4086/toc.2010.v006a005}}.

\bibitem{matula1993linear}
David~W Matula.
\newblock A linear time 2+ $\varepsilon$ approximation algorithm for edge
  connectivity.
\newblock In {\em Proceedings of the fourth annual ACM-SIAM Symposium on
  Discrete algorithms}, pages 500--504, 1993.
\newblock URL: \url{https://dl.acm.org/doi/10.5555/313559.313872}.

\bibitem{SIDMANagamochi}
Hiroshi Nagamochi and Toshihide Ibaraki.
\newblock Computing edge-connectivity in multigraphs and capacitated graphs.
\newblock {\em SIAM Journal on Discrete Mathematics}, 5(1):54--66, 1992.
\newblock \href {https://doi.org/10.1137/0405004} {\path{doi:10.1137/0405004}}.

\bibitem{QueyranneSubmodular}
Maurice Queyranne.
\newblock A combinatorial algorithm for minimizing symmetric submodular
  functions.
\newblock In {\em Proceedings of the Sixth Annual ACM-SIAM Symposium on
  Discrete Algorithms}, SODA '95, pages 98--101, USA, 1995. Society for
  Industrial and Applied Mathematics.
\newblock URL: \url{https://dl.acm.org/doi/10.5555/313651.313669}.

\bibitem{queyranne1998minimizing}
Maurice Queyranne.
\newblock Minimizing symmetric submodular functions.
\newblock {\em Mathematical Programming}, 82(1-2):3--12, 1998.
\newblock \href {https://doi.org/10.1007/BF01585863}
  {\path{doi:10.1007/BF01585863}}.

\bibitem{Rizzi00}
Romeo Rizzi.
\newblock {NOTE} - on minimizing symmetric set functions.
\newblock {\em Combinatorica}, 20(3):445--450, 2000.
\newblock \href {https://doi.org/10.1007/s004930070017}
  {\path{doi:10.1007/s004930070017}}.

\bibitem{StoerAlgorithm}
Mechthild Stoer and Frank Wagner.
\newblock A simple min-cut algorithm.
\newblock {\em J. ACM}, 44(4):585--591, jul 1997.
\newblock \href {https://doi.org/10.1145/263867.263872}
  {\path{doi:10.1145/263867.263872}}.

\bibitem{Linqing}
Linqing Tang and Peng Zhang.
\newblock Approximating minimum label $s$-$t$ cut via linear programming.
\newblock In {\em Proc. of the 10th Latin American Symposium on Theoretical
  Informatics (LATIN)}, volume 7256 of {\em LNCS}, pages 655--666, 2012.
\newblock \href {https://doi.org/10.1007/978-3-642-29344-3_55}
  {\path{doi:10.1007/978-3-642-29344-3_55}}.

\bibitem{zhang2014}
Peng Zhang.
\newblock Efficient algorithms for the label cut problems.
\newblock In {\em Proc. of the 11th Annual Conference on Theory and
  Applications of Models of Computation (TAMC)}, volume 8402 of {\em LNCS},
  pages 259--270, 2014.
\newblock \href {https://doi.org/10.1007/978-3-319-06089-7_18}
  {\path{doi:10.1007/978-3-319-06089-7_18}}.

\bibitem{zhang2011approximation}
Peng Zhang, Jin-Yi Cai, Lin-Qing Tang, and Wen-Bo Zhao.
\newblock Approximation and hardness results for label cut and related
  problems.
\newblock {\em Journal of Combinatorial Optimization}, 21(2):192--208, 2011.
\newblock \href {https://doi.org/10.1007/s10878-009-9222-0}
  {\path{doi:10.1007/s10878-009-9222-0}}.

\bibitem{zhang2016label}
Peng Zhang and Bin Fu.
\newblock The label cut problem with respect to path length and label
  frequency.
\newblock {\em Theor. Comput. Sci.}, 648:72--83, 2016.
\newblock \href {https://doi.org/10.1016/j.tcs.2016.08.006}
  {\path{doi:10.1016/j.tcs.2016.08.006}}.

\bibitem{Zhang2018}
Peng Zhang, Bin Fu, and Linqing Tang.
\newblock Simpler and better approximation algorithms for the unweighted
  minimum label $s$-$t$ cut problem.
\newblock {\em Algorithmica}, 80(1):398--409, Jan 2018.
\newblock \href {https://doi.org/10.1007/s00453-016-0265-1}
  {\path{doi:10.1007/s00453-016-0265-1}}.

\bibitem{ZHANG2020INFCOMP}
Peng Zhang and Linqing Tang.
\newblock Minimum label s-t cut has large integrality gaps.
\newblock {\em Information and Computation}, 275:104543, 2020.
\newblock \href {https://doi.org/https://doi.org/10.1016/j.ic.2020.104543}
  {\path{doi:https://doi.org/10.1016/j.ic.2020.104543}}.

\end{thebibliography}
\end{document}